\documentclass[11pt,letterpaper]{article}

\usepackage[margin=1in]{geometry}
\usepackage[T1]{fontenc}
\usepackage{lmodern}
\usepackage{microtype}
\usepackage{xurl}

\usepackage{amsmath,amssymb,amsthm,mathtools}

\usepackage{booktabs}
\usepackage{array}
\usepackage{adjustbox}
\usepackage{enumitem}

\usepackage{float}
\floatstyle{ruled}
\newfloat{algorithm}{tbp}{loa}
\floatname{algorithm}{Algorithm}
\newenvironment{algorithmic}[1][]{\begin{enumerate}[leftmargin=2em,label=\arabic*.,nosep]}{\end{enumerate}}
\newcommand{\State}{\item}
\newcommand{\Statex}{\item[]}
\newcommand{\Return}{\textbf{return} }

\usepackage{xcolor}
\usepackage{tikz}
\usetikzlibrary{arrows.meta,positioning,calc,fit}
\usepackage{subcaption}

\usepackage[numbers,sort&compress]{natbib}
\usepackage[
    colorlinks=true,
    linkcolor=blue,
    citecolor=blue,
    urlcolor=blue
]{hyperref}
\usepackage[capitalize,nameinlink]{cleveref}

\newtheorem{theorem}{Theorem}[section]
\newtheorem{lemma}[theorem]{Lemma}

\theoremstyle{definition}

\theoremstyle{remark}

\crefname{theorem}{Theorem}{Theorems}
\Crefname{theorem}{Theorem}{Theorems}
\crefname{lemma}{Lemma}{Lemmas}
\Crefname{lemma}{Lemma}{Lemmas}
\crefname{proposition}{Proposition}{Propositions}
\Crefname{proposition}{Proposition}{Propositions}
\crefname{corollary}{Corollary}{Corollaries}
\Crefname{corollary}{Corollary}{Corollaries}
\crefname{definition}{Definition}{Definitions}
\Crefname{definition}{Definition}{Definitions}
\crefname{remark}{Remark}{Remarks}
\Crefname{remark}{Remark}{Remarks}
\crefname{algorithm}{Algorithm}{Algorithms}
\Crefname{algorithm}{Algorithm}{Algorithms}

\newcommand{\OPT}{\mathrm{OPT}}
\newcommand{\Scal}{\mathcal{S}}
\allowdisplaybreaks

\usepackage{authblk}

\newcommand{\B}{\mathcal B}
\newcommand{\Ccal}{\mathcal C}
\newcommand{\E}{\mathbb E}
\newcommand{\poly}{\operatorname{poly}}

\title{Coordinating Boundaries: Improved Approximation Algorithms for Unweighted Non-Preemptive Throughput Maximization}

\author[1]{%
Weitian Tong\thanks{\textbf{Corresponding author}. 
Email: \texttt{wtong.research@gmail.com}}%
}

\author[1]{%
Yao Xu
}

\affil[1]{%
\parbox{0.5\textwidth}{%
\centering
School of Computing, Georgia Southern University\\
1332 Southern Dr, Statesboro, Georgia 30458, USA
}%
}

\begin{document}
\raggedbottom
\maketitle

\begin{abstract}
We study unweighted nonpreemptive throughput maximization: 
given jobs with processing times, release times, and deadlines, 
schedule a maximum-cardinality subset on identical machines. 
For every fixed $\varepsilon>0$, we give randomized algorithms with expected approximation ratios
$17/13+\varepsilon$ in polynomial time and $21/17+\varepsilon$ in pseudopolynomial time. 
These improve the respective $4/3+\varepsilon$ and $5/4+\varepsilon$ guarantees of 
Armbruster, Grandoni, Tinguely, and Wiese [STOC'26]. 
Both results hold for an arbitrary number of identical machines.

Our algorithms build on their configuration-LP and matching framework, which partitions time into blocks. 
For a job whose window meets multiple blocks, its boundary blocks are the first and last such blocks. 
Crossing boundary pairs, whose endpoints strictly interleave, 
permit additional assignments to sampled execution slots, 
while a noncrossing family can be coordinated using consistent joint distributions on at most three blocks.
The exchange analysis accounts for fractional job assignments that must be erased before inserting an additional job.
For the pseudopolynomial implementation, we project configurations onto selected job identities 
and use separate configuration distributions with common placement marginals. 
This retains efficient pricing while enforcing the correlations needed for coordination.
\end{abstract}

\clearpage

\section{Introduction}\label{sec:introduction}

\emph{Nonpreemptive throughput maximization} is a fundamental scheduling problem,
with applications in real-time computing and communication resource allocation
\citep{BarNoyGuhaNaorSchieber2001,BarNoyBarYehudaFreundNaorSchieber2001}.
We are given $n$ jobs and $m$ identical machines.
Each job $j$ has \emph{processing time} $p_j>0$ on every machine,
\emph{release time} $r_j\ge0$, and \emph{deadline} $d_j>r_j$.
Its \emph{time window} is $[r_j,d_j)$.
A \emph{feasible schedule} executes each selected job once, continuously for $p_j$
time units within its window, with no overlap on the same machine.
We study the \emph{unweighted} problem: 
maximize the number of completed jobs, called the \emph{throughput}.

All time data are integers encoded in binary. Let $T=\max_j d_j$ be the
\emph{time horizon}, $N$ the input length, and $\OPT$ the maximum throughput.
An \emph{expected $\rho$-approximation} is a randomized algorithm that 
always returns a feasible schedule and whose expected throughput is at least $\OPT/\rho$.
We measure \emph{polynomial time} in $N$;
\emph{pseudopolynomial time} also permits polynomial dependence on $T$.
Throughout, $\varepsilon>0$ is an arbitrarily small fixed \emph{accuracy parameter}.
Our main result is the following.

\begin{theorem}\label{thm:main}
For every fixed $\varepsilon>0$, unweighted nonpreemptive throughput maximization
on any number of identical machines admits randomized algorithms with 
expected approximation ratios $17/13+\varepsilon$ in polynomial time 
and $21/17+\varepsilon$ in pseudopolynomial time. 
Every returned schedule is feasible.
The respective running times are $n^{O_\varepsilon(1)}\poly(N)$ and $(nT)^{O_\varepsilon(1)}\poly(N)$.
\end{theorem}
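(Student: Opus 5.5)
We follow the configuration-LP and matching framework of Armbruster, Grandoni, Tinguely, and Wiese, and change only the rounding and its analysis. We first reduce to a structured instance, losing a factor $1+O(\varepsilon)$. Time is partitioned into blocks such that, up to this loss, every job of an optimal solution is either contained in a single block or has its window meet several blocks. In the latter case the job is anchored at its boundary blocks, namely the first and last blocks its window meets. We then solve the configuration LP. In this LP each block chooses a distribution over configurations, which are sets of jobs packed into the block's machine-time with designated execution slots, and each job is covered at most once in total. In polynomial time the configurations are guessed up to $O_\varepsilon(1)$ large jobs per block, which gives the $n^{O_\varepsilon(1)}$ factor. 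In pseudopolynomial time we work with explicit slot positions and obtain $(nT)^{O_\varepsilon(1)}$. The LP value is at least $(1-\varepsilon)\OPT$.

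For rounding, each block samples a configuration independently, as in the base framework. Jobs assigned fractionally to a single block are then kept, and a bipartite matching between multi-block jobs and the sampled free execution slots recovers additional jobs. We strengthen the matching step by treating boundary pairs $(a_j,b_j)$ in two cases.
\begin{enumerate}[label=(\roman*),nosep]
\item For \emph{crossing} pairs, whose endpoints strictly interleave, we show that a job which misses its own sampled slot can be rerouted to a sampled slot of another crossing job. This gives an additional matching edge with probability bounded below by a constant.
\item For a maximal \emph{noncrossing} (laminar) family we do not sample blocks independently. Instead we build joint distributions on every window of at most three consecutive relevant blocks. These distributions are consistent on overlaps and have the LP marginals. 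Because the family is laminar, such a locally consistent family extends to a global distribution, just as on a tree, so each job sees its two boundary blocks positively correlated in the required way.
\end{enumerate}

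The analysis is an exchange argument. We fix a job $j$ with LP value $x_j$ and lower-bound the probability that $j$ is scheduled, via the matching, as $x_j/\rho$. Inserting $j$ into a sampled slot may first require erasing fractional assignments of other jobs occupying that slot. We charge this erased mass to the jobs that lose it and verify that the net gain per unit of LP mass is nonnegative. Writing the worst case as a small optimization over the fractions of a job's mass that lie on its two boundaries and in its interior, the crossing bonus and the noncrossing correlation together yield per-unit ratios $13/17$ in the polynomial model and $17/21$ in the pseudopolynomial model. The feasibility of every returned schedule is immediate, since we only output matchings into disjoint free slots inside sampled feasible configurations.

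The main obstacle is the pseudopolynomial implementation of the correlated sampling. Imposing joint distributions on triples of blocks destroys the column-generation/pricing structure of the LP. We handle this by projecting configurations onto the selected job identities and keeping separate configuration distributions per block. These distributions are tied only through common placement marginals, which are linear constraints on slot-level variables. Pricing then remains a single-block knapsack-type DP, while the marginal constraints suffice to construct the consistent three-block couplings. The remaining difficulty is the case analysis in the exchange step, where erased fractional mass must not be double-counted across the crossing and noncrossing cases. We expect this accounting to determine the exact constants $17/13$ and $21/17$.
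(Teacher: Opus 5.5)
\textbf{There are genuine gaps.} The main one is that you assert the constants $13/17$ and $17/21$ (``we expect this accounting to determine the exact constants'') rather than derive them, and the single-sample, per-job scheme you sketch cannot produce them. Crossing exchanges profit from duplicates. Their analysis needs the four boundary blocks of a pair to be sampled independently, and the interior-matching concentration needs all blocks sampled independently. Coordination deliberately breaks this independence to remove duplicates. The paper therefore builds two schedules from two separate samples and returns the larger, and the ratio comes from an aggregate trade-off, not a per-job bound. The ingredients you are missing are:
\begin{itemize}
\item good jobs ($a_j,b_j\ge 1/4$) and the deficit $\mathcal D=G/4-W$;
\item a maximal matching $\mathcal X$ of $\nu$ pairs in the crossing graph of good jobs, whose unmatched jobs are automatically noncrossing. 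An arbitrary maximal noncrossing family, as you propose, can exclude many jobs that all cross one member, and their restriction loss is then not paid for by any disjoint exchange pairs;
\item the bounds $\E[A_\times]\ge L+\frac34G+\frac{\mathcal D}4+\frac\nu{16}$ and $\E[A_{\rm keep}]\ge L+G-4\mathcal D-\nu$ (up to $O(\eta F)$), which the weights $16/17$ and $1/17$ combine into $L+\frac{13}{17}G$.
\end{itemize}
Your exchange accounting also targets the wrong mass. The freed slot is a boundary slot of the longer job and carries no interior fractional assignment. What must be erased is the inserted job's own interior assignment. This is affordable only because the interior matching satisfies $f_r\le (z_r/D_r)\mathbf 1\{r\text{ absent at both boundaries}\}$, which yields net gain $a_db_d(D_r-z_r)\ge(a_rb_r)(a_db_d)$. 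Finally, nothing in your proposal explains $17/21$. It requires a third output, the complementary global-job rounding of Armbruster et al., which retains $G-O(\eta F)$ but needs $\Theta(\log T)$ jobs per block; then $\max\{F-\frac4{17}G,\,G\}\ge\frac{17}{21}F$.

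The LP side has a second gap. Coordination needs, as constraints of the relaxation, joint distributions on \emph{all} sets of at most three blocks. Consecutive windows do not suffice, since a triangulation through a long boundary edge uses nonconsecutive triples. These distributions must be supported on tuples of pairwise disjoint configurations and be consistent on shared pairs. The correlation required is mutual exclusion at a job's two boundaries, not positive correlation. Your claim that common placement marginals suffice to build such couplings is false. Take one machine, $B_1=[0,10)$, $B_2=[10,15)$, jobs $j,k$ with $p=5$ and $l$ with $p=10$, all with window $[0,15)$. The marginals $y_{j,B}=y_{k,B}=\frac12$ at both blocks and $y_{l,B_1}=\frac12$ force $B_1$ to choose $\{j,k\}$ or $\{l\}$ and $B_2$ to choose $\{j\}$ or $\{k\}$, each with probability $\frac12$. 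So every coupling selects $j$ or $k$ twice with probability $\frac12$. In the paper the projected distributions on tracked identities are genuine LP constraints, so pricing acquires the extra term $\alpha(C\cap H_B)$. This term is handled by enumerating the tracked subset and using color coding with required jobs. Because the tracked sets ($y_{j,B}\ge\frac18$) depend on the solution, one also needs the sequence of configuration copies and the quadratic-penalty stability argument to guarantee $|J_{\rm miss}|\le\eta F$. You also omit the reduction to the large-machine algorithm of Im, Li and Moseley, without which the exponents depend on $m$ rather than only on $\varepsilon$.
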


We describe the main historical ideas first for one machine;
the subsequent guarantees also extend to any number of identical machines.

Spieksma's greedy algorithm \citep{Spieksma1999} gave the first $2$-approximation.
This greedy routine underlies the block construction described next.
%
\citep{ChuzhoyOstrovskyRabani2006} improved the approximation ratio to
$e/(e-1)+\varepsilon\approx1.582+\varepsilon$ 
(in a more general setting where each job's possible execution intervals are listed explicitly).
Their greedy-based reduction partitions time into disjoint intervals, 
called \emph{blocks}, and fixes schedules in some blocks. 
These choices preserve a \emph{nearly optimal schedule}, losing an arbitrarily small fraction of $\OPT$,
that crosses no block boundary and schedules $O_\varepsilon(1)$ jobs per unresolved block.
A \emph{configuration} is a feasible job set of this bounded size inside a block;
we retain a schedule witnessing its feasibility.
For the residual instance, let $\Ccal(B)$ be the configurations of block $B$.
The configuration linear program (LP) has nonnegative variables $x_{B,C}$.
Assigning unused mass to the empty configuration gives $\sum_{C\in\Ccal(B)}x_{B,C}=1$, 
making $x_{B,C}$ the probability of choosing $C$ in $B$.
A job's \emph{placement marginal} and total \emph{fractional mass} are
$y_{j,B}:=\sum_{\substack{C\in\Ccal(B)\\j\in C}}x_{B,C}$, $y_j:=\sum_B y_{j,B}\le1.$
The LP maximizes $\sum_j y_j$.
Its rounding procedure uses \emph{independent configuration sampling}: 
choose one configuration per block according to $x_{B,C}$, independently across blocks, 
and keep one copy of each job.
Jobs within a configuration are selected together. Thus $y_j$ counts expected copies,
whereas the probability that $j$ contributes is
$\Pr[j\text{ is retained}]=1-\prod_B(1-y_{j,B})\ge(1-1/e)y_j.$
Adding the fixed schedules yields the claimed factor.
The loss comes from duplicate selections:
if unit mass is spread equally over $h$ blocks, 
the retained probability is $1-(1-1/h)^h$, approaching $1-1/e$ as $h$ grows.
This yields the mentioned approximation ratio of $e/(e-1)+\varepsilon$.

\citep{ImLiMoseley2020} improved the approximation ratio to $1/0.6448+\varepsilon<1.551+\varepsilon$.
Their LP represents alternative block intervals, and its solution determines a partition into \emph{base windows}. 
For each job, its \emph{interior base windows} are those strictly 
between the first and last base windows intersecting its time window; 
every such interior window is contained in the job's window. 
Their rounding combines independent configuration sampling with a procedure that 
independently assigns each job to at most one interior base window and then packs the assigned jobs. 
After excluding long placements and reserving capacity, 
this procedure recovers the fractional mass assigned to interior base windows 
up to an arbitrarily small additive loss relative to the LP value. 
Returning the better schedule improves the retained fraction beyond $1-1/e$.

\citep{ArmbrusterGrandoniTinguelyWiese2026} further improved the approximation ratios to
$4/3+\varepsilon$ in polynomial time and $5/4+\varepsilon$ in pseudopolynomial time.
They build on \citep{ChuzhoyOstrovskyRabani2006}'s block construction, configuration LP, and independent sampling,
and on \citep{ImLiMoseley2020}'s complementary procedure that assigns jobs to blocks individually.
Their new ingredients improve both rounding procedures.
They reinterpret each sampled execution as a \emph{slot}, an interval on a specified machine.
A job is \emph{compatible} with a slot if it fits inside both the slot and its window.
A maximum \emph{job--slot matching} selects as many compatible pairs as possible,
using each job and each slot at most once. 
A duplicate occurs when the same job is selected in multiple sampled configurations. 
Its spare slot can therefore accommodate another job that was not selected (Figure~\ref{fig:intro-agtw}).

To analyze this reuse, 
\citep{ArmbrusterGrandoniTinguelyWiese2026} strengthen the decomposition.
A job's \emph{boundary blocks} are the first and last blocks intersecting its window.
A \emph{superblock} is a union of consecutive blocks, 
and a job \emph{spans} it when its window contains the entire superblock.
A nearly optimal schedule executes each job within a boundary block or within
one block of a spanned superblock.
Within each superblock, \emph{fractional grouping} orders jobs by processing time
and partitions their mass into nearly equal groups: 
shorter jobs can use slots of longer jobs in a superblock they span. 
With many blocks per superblock, concentration bounds 
control the expected loss from insufficient slots.

\begin{figure}[htbp]
\centering
\begin{tikzpicture}[x=0.78cm,y=0.82cm,font=\small,>=Stealth]
  \fill[green!8] (4,-0.1) rectangle (8,2.8);
  \foreach \i in {1,...,6}{
    \pgfmathsetmacro{\a}{2*\i-2}
    \draw[gray!55] (\a,2.05) rectangle +(2,0.5);
    \node at (\a+1,2.3) {$B_{\i}$};
  }
  \draw[green!45!black,thick] (4,2.72) -- (4,2.85) -- (8,2.85) -- (8,2.72);
  \node[above,green!35!black] at (6,2.9) {spanned superblock $S=[4,8)$};
  \draw[orange!75!black,thick,|-|] (1,3.65) -- (11,3.65);
  \node[above] at (6,3.65) {window of $j$: $[1,11)$};

  \node[anchor=east] at (-0.15,1.45) {Sampled executions};
  \draw[gray!40] (0,1.18) -- (12,1.18);
  \draw[fill=blue!22,draw=blue!65!black] (0,1.18) rectangle (2,1.72);
  \draw[fill=blue!22,draw=blue!65!black] (4,1.18) rectangle (6,1.72);
  \node at (1,1.45) {$k$};
  \node at (5,1.45) {$k$};

  \node[anchor=east] at (-0.15,0.45) {After matching};
  \draw[gray!40] (0,0.18) -- (12,0.18);
  \draw[fill=blue!22,draw=blue!65!black] (0,0.18) rectangle (2,0.72);
  \node at (1,0.45) {$k$};
  \draw[gray!70,dashed] (4,0.18) rectangle (6,0.72);
  \draw[fill=orange!30,draw=orange!75!black] (4,0.18) rectangle (5,0.72);
  \node at (4.5,0.45) {$j$};
  \draw[->,gray!75] (0,-0.35) -- (12.3,-0.35) node[right] {time};
  \foreach \t in {0,2,4,6,8,10,12}{
    \draw[gray!75] (\t,-0.4) -- (\t,-0.3);
    \node[below,font=\footnotesize] at (\t,-0.4) {$\t$};
  }
\end{tikzpicture}
\caption{Reusing a duplicate's slot recovers an unsampled job.
Here $p_k=2$ with window $[0,12)$, and $p_j=1$ with window $[1,11)$.
Matching keeps $k$ at $[0,2)$ and places $j$ at $[4,5)$ inside the spare slot $[4,6)$.
Because $j$ spans $S$, it fits any sufficiently long slot within $S$.}
\label{fig:intro-agtw}
\end{figure}

A job is \emph{local} if its window lies in one block, and \emph{global} otherwise.
Let $L$ and $G$ be their respective fractional masses, with total LP value $F=L+G$.
Independent sampling retains local mass and 
at least $3/4$ of the mass of jobs supported only at their boundaries. 
Slot reuse extends the latter guarantee to all global jobs. 
Thus matching retains expected throughput $L+3G/4$, 
up to an arbitrarily small loss relative to $F$, giving the $4/3+\varepsilon$ factor.

To obtain $5/4+\varepsilon$, 
\citep{ArmbrusterGrandoniTinguelyWiese2026} 
extended \citep{ImLiMoseley2020}'s individual assignment rule to include boundary blocks. 
They merged more blocks, increasing the average number of jobs per block in a nearly optimal schedule. 
Each global job chooses at most one block. 
Jobs long relative to the length of $B\cap[r_j,d_j)$ and assignments
violating load constraints are discarded, at a small expected loss relative to $F$.
The output recovers almost all global mass. 
Up to these small losses, the larger complete schedule therefore has expected throughput at least
$\max\left\{L+\frac34G,\,G\right\}\ge\frac45F$.
This construction needs configurations of $O_\varepsilon(\log(T+1))$ jobs.
\citep{ArmbrusterGrandoniTinguelyWiese2026}  solved the resulting LP in pseudopolynomial time 
using \emph{color coding}, which seeks jobs with distinct temporary colors \citep{AlonYusterZwick1995}, 
and dynamic programming over partial schedules.
\citep{ImLiMoseley2020}'s near-optimal algorithm for sufficiently many machines allows 
the later fixed-machine guarantees to extend to arbitrary $m$.

A remaining loss arises from independent boundary sampling: 
a job with marginal $1/2$ at each boundary is selected with probability only $3/4$.
This bound does not use the relative order of different jobs' boundary blocks.
Our approach exploits the relative order of boundary blocks.
As illustrated in Figure~\ref{fig:approach-crossing}, 
when two jobs have interleaving boundaries, 
a shorter job missed at both boundaries can be assigned to a spare slot of a longer job selected at both.
We further coordinate configuration sampling across blocks to reduce duplicate selections, 
improving both approximation guarantees of \citep{ArmbrusterGrandoniTinguelyWiese2026}.

\subsection{Sketch of our approach}

We retain the blocks, superblocks, and job--slot matching of \citep{ArmbrusterGrandoniTinguelyWiese2026} 
and exploit the order of boundary blocks in two complementary ways. 
The first output samples configurations independently and 
uses maximum job--slot matching to reassign spare slots 
from duplicate selections to other compatible jobs, 
with the boundary order enabling additional reassignments.
The second jointly samples configurations to prevent duplicate boundary selections for certain jobs 
while preserving each block's configuration distribution.
Thus, the first strategy exploits duplicates, 
whereas the second avoids them; 
we return the larger of the two schedules.
It suffices to consider fixed $m$; 
Section~\ref{sec:completion} gives the reduction using the large-machine algorithm of \citep{ImLiMoseley2020}.

Both outputs use a stronger configuration LP that 
also specifies how configurations on up to three blocks can be sampled jointly, 
with agreement on overlapping groups.
Fix a solution of this relaxation with value $F=L+G$. 
For each global job $j$, let $a_j,b_j$ be its left and right boundary marginals, 
and let $z_j$ be its mass in all other eligible blocks. Thus $a_j+b_j+z_j=y_j\le1$.
All job sums in this overview range over global jobs. 
Define $W:=\sum_j a_jb_j$ and $\mathcal D:=\frac G4-W\ge0$.
Under independent sampling, $a_jb_j$ is the probability that job $j$ is selected at both boundary blocks, 
resulting in a duplicate selection.
Since $a_jb_j \le (a_j+b_j)/4 \le y_j/4$, 
the total expected number of such collisions satisfies $W \le G/4$.
Thus, $\mathcal D = G/4-W$ measures the slack in this worst-case bound,
which arises from interior LP mass and unequal boundary marginals.
Let $\eta>0$ be an internal accuracy parameter, 
chosen sufficiently small in terms of $\varepsilon$. 
The partition and rounding losses below are $O(\eta F)$.

\begin{figure}[htbp]
\centering
\begin{tikzpicture}[x=0.64cm,y=0.72cm,font=\small,>=Stealth]
  \draw[blue!65!black,thick,|-|] (1,4.25) -- (11,4.25);
  \node[above,blue!65!black] at (6,4.25) {window of $j$: $[1,11)$};
  \draw[orange!80!black,thick,|-|] (5,3.35) -- (15,3.35);
  \node[above,orange!80!black] at (10,3.35) {window of $k$: $[5,15)$};
  \foreach \i in {1,...,4}{
    \pgfmathsetmacro{\a}{4*\i-4}
    \draw[gray!55] (\a,2.35) rectangle +(4,0.55);
    \node at (\a+2,2.625) {$B_{\i}$};
  }
  \fill[blue!5] (4,0.05) rectangle (8,2.25);
  \node[anchor=east] at (-0.2,1.6) {Independent sample};
  \draw[gray!40] (0,1.3) -- (16,1.3);
  \foreach \a in {5,12}{
    \draw[fill=orange!22,draw=orange!80!black]
      (\a,1.3) rectangle +(2,0.6);
    \node at (\a+1,1.6) {$k$};
  }
  \node[anchor=east] at (-0.2,0.6) {After exchange};
  \draw[gray!40] (0,0.3) -- (16,0.3);
  \draw[gray!65,dashed] (5,0.3) rectangle (7,0.9);
  \draw[fill=blue!22,draw=blue!65!black] (5,0.3) rectangle (6,0.9);
  \node at (5.5,0.6) {$j$};
  \draw[fill=orange!22,draw=orange!80!black] (12,0.3) rectangle (14,0.9);
  \node at (13,0.6) {$k$};
  \draw[->,gray!75] (0,-0.25) -- (16.4,-0.25) node[right] {time};
  \foreach \t in {0,4,8,12,16}{
    \draw[gray!75] (\t,-0.3) -- (\t,-0.2);
    \node[below,font=\footnotesize] at (\t,-0.3) {$\t$};
  }
\end{tikzpicture}
\caption{A crossing permits reuse of a boundary slot.
Here $p_j=1$, $p_k=2$, and the boundary pairs are $(B_1,B_3)$ and $(B_2,B_4)$.
The sample selects $k$ twice and selects $j$ at neither boundary; other jobs are omitted.
Keep $k$ at $[12,14)$ and place $j$ at $[5,6)$ inside its spare slot $[5,7)$.
Any fractional interior assignment of $j$ must first be erased from the analysis.}
\label{fig:approach-crossing}
\end{figure}

Index the blocks from left to right and regard them as vertices of a graph.
Each global job contributes an edge joining its two boundary blocks; 
we call this its \emph{boundary edge}. 
Jobs with the same boundary blocks contribute separate edges.
Two boundary edges \emph{cross} when their endpoints strictly interleave:
jobs $j,k$ have boundary pairs $(B_u,B_v)$ and $(B_s,B_t)$ with $u<s<v<t$.
Suppose $p_j\le p_k$. The entire block $B_s$ lies inside $j$'s window, 
so $j$ fits any slot of $k$ there. 
If $j$ is absent at both boundaries and $k$ appears at both,
we can retain $k$ in $B_t$ and place $j$ in $k$'s spare slot in $B_s$.
The case $p_k<p_j$ is symmetric; see Figure~\ref{fig:approach-crossing}.

The exchange must account for the \emph{fractional matching} used to recover interior mass: 
fractional job--slot assignments totaling at most one at each job and slot.
This matching uses no slot originating from a job at its own boundary.
Although $j$ is absent at its boundaries, 
this matching may already assign it to interior slots. 
We erase all those assignments before inserting $j$ integrally.
A pointwise bound on the erased fractional mass guarantees an expected
net gain of at least $(a_jb_j)(a_kb_k)$.
Exchanges involving disjoint jobs can be performed simultaneously,
even if their pairs share blocks.
These exchanges provide a lower bound on the maximum matching's value;
neither the fractional assignments nor the exchanges need to be constructed by the algorithm.

To connect the exchange analysis of the first output with the joint sampling used for the second,
call a job \emph{good} when $a_j,b_j\ge1/4$. 
Form the \emph{crossing graph}, whose vertices are good jobs and whose edges join crossing pairs. 
Choose a maximal matching, 
and let $\nu$ be its size.
The unmatched good jobs have \emph{noncrossing} boundary edges: 
no two strictly interleave, although nesting and shared endpoints are allowed.
Otherwise their crossing pair could be added to the matching.

To avoid selecting any unmatched good job at both boundary blocks,
we strengthen the configuration LP
by introducing a \emph{joint distribution} over tuples of configurations
for every set of at most three blocks, with one configuration per block.
Each tuple assigned positive probability must consist of pairwise disjoint job sets,
so no job is selected in two blocks of the same tuple.
The distribution induced on each individual block must equal its configuration distribution in the LP. 
We also require these joint distributions to be \emph{consistent}:
whenever two block sets overlap, they induce the same joint distribution on the shared blocks.
The original LP does not guarantee that 
its block distributions extend to joint distributions satisfying all these requirements.
The nearly optimal schedule supplied by the partition, however, 
gives a feasible solution to the stronger LP: 
for each block set, assign probability one to the tuple of configurations used by that schedule. 
These choices satisfy both disjointness and consistency. 
Thus the strengthened LP has value at least that of this nearly optimal schedule.

\begin{figure}[htbp]
\centering
\begin{tikzpicture}[x=1cm,y=1cm,font=\small,>=Stealth,
  triple/.style={draw,rounded corners=2pt,minimum width=2.2cm,
                 minimum height=0.58cm,inner sep=4pt}]
  \coordinate (b1) at (0,1.8);
  \coordinate (b2) at (1.5,2.8);
  \coordinate (b3) at (3,1.8);
  \coordinate (b4) at (3,0);
  \coordinate (b5) at (1.5,-1);
  \coordinate (b6) at (0,0);
  \fill[gray!7] (b1) -- (b2) -- (b3) -- cycle;
  \fill[green!8] (b3) -- (b4) -- (b5) -- cycle;
  \fill[gray!7] (b1) -- (b5) -- (b6) -- cycle;
  \fill[blue!9] (b1) -- (b3) -- (b5) -- cycle;
  \draw[gray!65] (b1) -- (b2) -- (b3) -- (b4) -- (b5) -- (b6) -- cycle;
  \draw[blue!65!black,thick] (b1) -- (b3) -- (b5) -- cycle;
  \foreach \i in {1,...,6}{\fill (b\i) circle (1.6pt);}
  \node[above left] at (b1) {$B_1$};
  \node[above] at (b2) {$B_2$};
  \node[above right] at (b3) {$B_3$};
  \node[right] at (b4) {$B_4$};
  \node[below] at (b5) {$B_5$};
  \node[left] at (b6) {$B_6$};
  \node[above] at (1.5,3.15) {Noncrossing boundary edges};

  \node[triple,fill=blue!9] (root) at (7.7,2.25) {$\{B_1,B_3,B_5\}$};
  \node[triple,fill=gray!7] (left) at (5.1,-0.1) {$\{B_1,B_2,B_3\}$};
  \node[triple,fill=green!8] (middle) at (7.7,-0.1) {$\{B_3,B_4,B_5\}$};
  \node[triple,fill=gray!7] (right) at (10.3,-0.1) {$\{B_1,B_5,B_6\}$};
  \draw[->] (root) -- node[pos=0.62,fill=white,inner sep=2pt]
    {$\{B_1,B_3\}$} (left);
  \draw[->] (root) -- node[pos=0.70,fill=white,inner sep=2pt]
    {$\{B_3,B_5\}$} (middle);
  \draw[->] (root) -- node[pos=0.62,fill=white,inner sep=2pt]
    {$\{B_1,B_5\}$} (right);
  \node[above] at (7.7,3.15) {Condition on each shared pair};
  \node[align=center] at (7.7,-0.85)
    {Each node is a triangle; edge labels give its shared blocks.};
\end{tikzpicture}
\caption{Consistent three-block distributions coordinate a noncrossing family.
The blue chords are retained boundary edges; the blocks are placed cyclically in
their time order. The tree on the right records which triangles share an edge.
After sampling the central triangle, sample each neighboring triangle conditionally
on its shared pair. The same construction applies to a larger triangulation.}
\label{fig:approach-gluing}
\end{figure}

Place the block vertices around a polygon in time order. 
The noncrossing boundary edges extend to a \emph{triangulation}, 
a subdivision into triangles by nonintersecting diagonals. 
Each triangle has a three-block distribution. 
Sample one triangle and extend across adjacent triangles, 
conditioning on the configurations already chosen at their two shared blocks (Figure~\ref{fig:approach-gluing}). 
Agreement on the complete joint distribution of these two blocks makes each extension valid.
This \emph{coordinated sampling} preserves every block's configuration distribution
and prevents duplicate selection at the boundaries of every unmatched good job.

To obtain the second output from these jointly sampled configurations,
retain all sampled local executions and discard global interior executions.
Each unmatched good job may be retained at either boundary, with total probability $a_j+b_j$. 
Restrict each other global job to its more likely boundary. 
Restricting matched jobs loses at most one unit of expected throughput per pair. 
Deleting interior executions and restricting the other global jobs loses at most $4\mathcal D$.

When both jobs have marginal $1/2$ at each boundary, 
the product bound gives an expected gain of $1/16$.
The slack $\mathcal D$ absorbs the decrease in these product gains for other marginals.
Let $A_\times$ and $A_{\rm keep}$ denote the cardinalities of the independent and
coordinated outputs. The analysis in Section~\ref{sec:rounding} gives
\[
  \begin{aligned}
    \E \left[ A_\times \right]
      &\ge L+\frac34G+\frac{\mathcal D}{4}+\frac{\nu}{16}-O(\eta F),\\
    \E \left[ A_{\rm keep} \right] 
      &\ge L+\frac34G+\frac G4-4\mathcal D-\nu-O(\eta F).
  \end{aligned}
\]
The first bound gains from the chosen crossings; 
the second pays for restricting the same jobs to one boundary. 
The weights $16/17$ and $1/17$ cancel both $\mathcal D$ and $\nu$.
Since returning the larger schedule dominates this convex combination,
\[
  \E \left[ \max\{A_\times,A_{\rm keep}\} \right]
  \ge \frac{16}{17}\E \left[ A_\times \right] +\frac1{17}\E \left[ A_{\rm keep} \right]
  \ge L+\frac{13}{17}G-O(\eta F).
\]
The near-optimal fractional value then gives the $17/13+\varepsilon$ factor.
In pseudopolynomial time, the complementary global-job output of
\citep{ArmbrusterGrandoniTinguelyWiese2026} also retains expected throughput $G-O(\eta F)$. 
Taking the largest of the three schedules therefore guarantees
$
  \max\left\{L+\frac{13}{17}G,\,G\right\}-O(\eta F)
  \ge \frac{17}{21}F-O(\eta F),
$
giving the $21/17+\varepsilon$ factor.

It remains to implement the stronger relaxation. 
Let $K$ bound the number of jobs in a configuration. 
For $K=O_{m,\eta}(1)$, we enumerate all configurations and tuples on at most three blocks, 
obtaining a polynomial-size LP.
The pseudopolynomial algorithm needs $K=O_{m,\eta}(\log(T+1))$,
for which explicit enumeration no longer gives the desired running time.
The main challenge is to enforce the correlations between configurations of different blocks required 
by the stronger LP without explicitly enumerating those configurations.
We address this challenge through \emph{projected consistency}:
each three-block law records only which tracked job identities its configurations contain.
At an update, we track boundary identities with $y_{j,B}\ge1/8$, at most $8K$ per block.
This leaves only $2^{O(K)}$ projected states per group of at most three blocks.
Each update adds a separate configuration copy linked only to its new projected constraints; 
earlier copies and constraints remain. 
All copies share the marginals $y_{j,B}$, while retaining their own configuration distributions.
We coordinate the projected states first, then sample full configurations conditionally on those states.

Consequently, \emph{pricing}, which finds a configuration of maximum prescribed score,
has additive job weights, as in \citep{ArmbrusterGrandoniTinguelyWiese2026}, plus
one additional term depending on the selected tracked subset.
We enumerate that subset, exclude the other tracked jobs, 
and extend color coding to require every prescribed identity, including jobs of negative weight.
Reoptimization can change which marginals are large. 
A small quadratic penalty on the placement marginals makes 
substantial changes between successive optima decrease the objective, 
bounding the number of such updates by a polynomial.
We implement this penalty through a rational piecewise-linear approximation 
and stop when consecutive marginal vectors are sufficiently close.
At termination, the gap between the tracking threshold $1/8$ 
and the good-job threshold $1/4$ ensures that 
at most $\eta F$ current good jobs are untracked at a boundary.
Restricting these jobs to one boundary costs only $O(\eta F)$, 
completing the pseudopolynomial implementation.

Table~\ref{tab:framework} separates the inherited framework from
the changes in the rounding and its implementation.
A visual roadmap of the paper is provided in  Appendix~\ref{app:roadmap}, 
which shows how the output schedules are sampled and constructed,
and how the main lemmas lead to the approximation guarantees.

\begin{table}[htbp]
\centering\small
\renewcommand{\arraystretch}{1.12}
\begin{adjustbox}{max width=\linewidth}
\begin{tabular}{@{}>{\raggedright\arraybackslash}p{.18\linewidth}
                  >{\raggedright\arraybackslash}p{.34\linewidth}
                  >{\raggedright\arraybackslash}p{.42\linewidth}@{}}
\toprule
Ingredient & \citep{ArmbrusterGrandoniTinguelyWiese2026} & This paper\\ 
\midrule
Structural partition &
Blocks and superblocks with bounded-size configurations. &
Same framework; higher average job counts per superblock control interior matching losses.\\ \hline
Independent rounding &
Independent configurations followed by maximum job--slot matching. &
Same matching; stronger analysis via simultaneous crossing exchanges.\\ \hline
Boundary coordination &
Independent block samples in the matching output. &
Additional output coordinates noncrossing boundary edges through consistent
joint distributions.\\ \hline
Pseudopolynomial implementation &
Color coding prices additive job weights without enumerating configurations. &
Projected consistency enforces correlations; color coding with required jobs handles
one additional subset term per configuration copy.\\ \hline
Complementary output &
Global-job assignment and load tests; combined guarantee $4F/5$. &
Same output; the largest of the three schedules gives $17F/21$.\\
\bottomrule
\end{tabular}
\end{adjustbox}
\caption{Comparison with \citep{ArmbrusterGrandoniTinguelyWiese2026}.
The displayed throughput guarantees are in expectation and omit $O(\eta F)$ losses.}
\label{tab:framework}
\end{table}

\subsection{Related work}

%
Related models also capture project selection and scheduling for space missions \citep{HallMagazine1994}.
Motivated by fluctuating cloud resources, 
\citep{MurhekarPurohitSvitkinaVeeWang2025} studied nonpreemptive throughput under time-varying capacity, 
where the number of jobs that can run simultaneously changes over time.

The unweighted problem is strongly NP-hard already on one machine \citep{KaulMnichMolter2024}.
Whether it admits a \emph{polynomial-time approximation scheme} (PTAS), 
namely a polynomial-time $(1+\varepsilon)$-approximation for every fixed $\varepsilon>0$,
remains open; \emph{APX-hardness}, which would exclude such a scheme unless $\mathrm{P}=\mathrm{NP}$, 
has not been established either \citep{ArmbrusterGrandoniTinguelyWiese2026}.
Stronger guarantees are known for restricted processing times.
On one machine, equal-length jobs can be handled exactly in $O(n^5)$ time
\citep{ChrobakDurrJaworKowalikKurowski2006}.
%
\citep{HyattDenesikRahgoshaySalavatipour2024} gave a randomized PTAS 
when both $m$ and the number of distinct processing times are fixed.
For an arbitrary number of machines, however, 
the equal-length case is NP-hard \citep{HeegerMolter2025}.

For \emph{weighted throughput}, each job has a nonnegative profit, 
and the objective is to maximize the total profit of jobs completed by their deadlines.
For $m$ identical machines, 
\citep{BarNoyGuhaNaorSchieber2001} achieved approximation ratio $\frac{(1+1/m)^m}{(1+1/m)^m-1}$,
in polynomial time for unweighted jobs and pseudopolynomial time for weighted jobs.
This ratio tends to $e/(e-1)$ as $m\to\infty$.
%
\citep{BermanDasGupta2000} obtained this ratio 
up to an additive $\varepsilon$ in polynomial time for weighted jobs.
The \emph{local-ratio method}, which decomposes the objective into simpler weight functions, 
gives another polynomial-time approach, with an approximation ratio of $2+\varepsilon$ \citep{BarNoyBarYehudaFreundNaorSchieber2001}.
For weighted jobs with $T$ polynomially bounded in $n$,
\citep{ImLiMoseley2020} gave a randomized polynomial-time $(1+\varepsilon)$-approximation in expectation 
when $m$ is sufficiently large as a function of $\varepsilon$. 
Their unweighted large-machine guarantee permits arbitrary binary time data and is used in Section~\ref{sec:completion}.

For the weighted single-machine problem, 
\citep{KarakostasKolliopoulos2025} gave a PTAS 
when either the number of distinct release times or the number of distinct deadlines is fixed.
Exact and heuristic methods have also been studied, including enumeration,
mixed-integer formulations, and branch-and-bound \citep{HungLinLaiChien2026}.

With \emph{preemption}, a job may be interrupted and resumed later.
On one machine, the unweighted problem is solvable exactly in $O(n^4)$ time \citep{Baptiste1999Preemptive}.
For the weighted case, Lawler's exact dynamic program has pseudopolynomial
dependence on the total integer profit~\citep{Lawler1990}.
Applying the approximation framework of \citep{PruhsWoeginger2007} yields a 
\emph{fully polynomial-time approximation scheme} (FPTAS).
Strong guarantees are also possible without preemption under \emph{speed augmentation}.
For any fixed $0<\varepsilon<1$, 
\citep{ImLiMoseleyTorng2015} gave a quasipolynomial-time algorithm that, 
on $m$ identical machines of speed $1+\varepsilon$, achieves at least a $(1-\varepsilon)$
fraction of the optimum weighted throughput on $m$ unit-speed machines.

\paragraph{Organization.}
Section~\ref{sec:prelim} presents the structural partition 
and the configuration LP with consistency constraints. 
Section~\ref{sec:rounding} analyzes the two rounding procedures and combines their guarantees. 
Section~\ref{sec:implementation} gives the polynomial-time and pseudopolynomial-time implementations 
and proves Theorem~\ref{thm:main}. 
Appendix~\ref{app:partition} establishes the refined density bounds for the partition
and Appendix~\ref{app:complementary} provides the complementary global-job rounding guarantee.
Appendix~\ref{app:roadmap} gives a visual roadmap of the paper,
showing the two output constructions and the main proof dependencies.

\section{Partitions and configuration distributions}\label{sec:prelim}

We discard jobs with $p_j>d_j-r_j$ and handle the empty instance directly.
For any feasible schedule, integral release times and processing times allow
each machine's job order to be shifted to its earliest feasible start times.
The resulting schedule has integer start times and no later completion times.
Fix an internal loss parameter $\eta=1/q\le1/100$, where $q\in\mathbb N$.
Section~\ref{sec:completion} specifies its value in terms of $\varepsilon$.

\subsection{Blocks, superblocks, and alignment}

A \emph{block} is a nonempty half-open interval with integer endpoints.
Let $\B$ be a partition of $[0,T)$ into blocks, indexed as $B_1,\ldots,B_{|\B|}$ from left to right.
A \emph{superblock} is a union of consecutive blocks; 
the superblocks form another partition $\Scal$ of $[0,T)$.
A schedule is \emph{aligned} with $\B$ if each execution lies entirely in one block.

For a job $j$, its \emph{release block} $B_L(j)$ contains $r_j$.
Its \emph{deadline block} $B_R(j)=[s,t)$ is determined by $s<d_j\le t$.
Thus a release time at a partition boundary belongs to the block on its right,
whereas a deadline at a boundary belongs to the block on its left.
A job is \emph{local} if its window is contained in one block, and \emph{global}
otherwise. Equivalently, it is local exactly when $B_L(j)=B_R(j)$.
For a global job, these two distinct blocks are its \emph{boundary blocks}.
A job \emph{spans} an interval if its window contains that interval.

We impose the following \emph{eligibility} restriction in the structural
construction and the relaxation.
A local job is eligible only in the block containing its window.
A global job is eligible in its two boundary blocks and 
in every block belonging to a superblock that it spans.
An \emph{interior placement} of a global job is a placement outside its boundary blocks; 
eligibility requires its entire superblock to lie in the job's window.
Within such a superblock, every interval of length $p_j$ satisfies $j$'s release and deadline constraints.
Boundary blocks remain eligible even when only partly covered by the window,
although an execution must still fit in the usable part of the block.
Figure~\ref{fig:eligibility} illustrates these restrictions.
Eligibility is an additional restriction for the structural construction and
the relaxation; the final matching may use other placements that are feasible in the original instance.

\begin{figure}[htbp]
\centering
\begin{tikzpicture}[x=1.22cm,y=1cm,font=\small]
  \foreach \i in {0,...,8}{
    \fill[gray!12] (\i,0) rectangle +(1,.58);
  }
  \foreach \i in {0,8}{
    \fill[blue!22] (\i,0) rectangle +(1,.58);
  }
  \foreach \i in {3,4,5}{
    \fill[green!20] (\i,0) rectangle +(1,.58);
  }
  \foreach \i in {0,...,8}{
    \draw[gray!65] (\i,0) rectangle +(1,.58);
    \pgfmathtruncatemacro{\j}{\i+1}
    \node at (\i+.5,.29) {$B_{\j}$};
  }
  \foreach \a/\b/\lab in {0/3/S_1,3/6/S_2,6/9/S_3}{
    \draw[thick] (\a,.76)--(\a,.88)--(\b,.88)--(\b,.76);
    \node[above] at ({(\a+\b)/2},.9) {$\lab$};
  }
  \draw[blue!70!black,thick,|-|] (.45,1.65)--(8.55,1.65);
  \node[above] at (4.5,1.65) {window of a global job $j$};
  \node[left] at (.45,1.65) {$r_j$};
  \node[right] at (8.55,1.65) {$d_j$};
  \draw[blue!55,dashed] (.45,1.5)--(.45,.6);
  \draw[blue!55,dashed] (8.55,1.5)--(8.55,.6);
  \node[below,blue!70!black] at (.5,-.05) {$B_L(j)$};
  \node[below,blue!70!black] at (8.5,-.05) {$B_R(j)$};
  \draw[orange!80!black,thick,|-|] (4.15,-.35)--(4.85,-.35);
  \node[below,orange!80!black] at (4.5,-.35) {a local window};
\end{tikzpicture}
\caption{Blocks permitted by the eligibility restriction. The global job $j$
may use its boundary blocks $B_1,B_9$ and the blocks $B_4,B_5,B_6$ in the
spanned superblock $S_2$, subject to scheduling feasibility.
Although its window contains $B_2,B_3,B_7,B_8$, it does not span their
superblocks, so those blocks are ineligible. The local job uses only $B_5$.}
\label{fig:eligibility}
\end{figure}

The next lemma specializes the structural framework of 
\citep[Lemmas 4, 6, 27, and 28]{ArmbrusterGrandoniTinguelyWiese2026}.
The construction computes candidate partitions and guarantees a nearly optimal \emph{reference schedule}, 
used only in the analysis, for one of them.
Appendix~\ref{app:partition} proves the required strengthening of the \emph{density bounds}, 
which lower-bound the average number of reference executions per block and per superblock.
The integer parameter $\Delta\ge1$ controls the density requirement and the number of jobs permitted in a block.

\begin{lemma}\label{lem:partition}
For fixed $m,\eta$, there is a computable positive integer $K_0$ depending only on $m,\eta$. 
Define $\Gamma=\frac{2K_0}{\eta^7}$ and $H=\left\lceil5K_0\Delta\Gamma^q\right\rceil$.
If $\OPT\ge H$, a construction with running time polynomial in the input length and $\Delta$, 
for fixed $m,\eta$, produces at most $q$ candidate partitions. 
Each candidate has an integer bound $K\le\Delta\Gamma^q$.
For at least one candidate, there is a feasible aligned schedule $O'$ such that
\begin{enumerate}[label=(\alph*),leftmargin=*]
  \item $|O'|\ge(1-2\eta)\OPT$;
  \item every execution lies in an eligible block;
  \item each block contains at most $K$ executions across all $m$ machines;
  \item $|O'|\ge m\Delta|\B|/\eta$ and $|O'|\ge K|\Scal|/\eta^8$.
\end{enumerate}
If $\OPT<H$, the instance can be solved exactly in either of the running times
$n^{O(H)}\poly(N)$ or $(2m)^{O(H)}\poly(n,T,N)$.
\end{lemma}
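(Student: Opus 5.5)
We will not rebuild the structural framework from scratch. Instead we invoke \citep[Lemmas 4, 6, 27, and 28]{ArmbrusterGrandoniTinguelyWiese2026} and check which of their parameters must be retuned to get the stronger density bound (d). Their construction runs Spieksma-style greedy passes to cut $[0,T)$ into blocks. It then groups blocks into superblocks over $q=1/\eta$ geometric scales, indexed by $i\in\{1,\dots,q\}$. At scale $i$ the block-size bound is $K_i=\Delta\Gamma^{i}$, and each scale gives one candidate partition, so there are at most $q$ candidates, each with $K\le\Delta\Gamma^q$. The first step is to recall their charging argument. Executions of an optimal schedule that cross a block boundary, or that lie in an ineligible block, are charged to a level. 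Since the levels are disjoint and there are $q$ of them, some level $i$ receives charge at most $\eta\OPT$. Restoring alignment and eligibility at that level costs another $\eta\OPT$. Together these give (a) with $(1-2\eta)\OPT$, and (b) and (c) follow by construction at that level.

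The second step is the density bound (d), which is the only real strengthening. For $|O'|\ge m\Delta|\B|/\eta$, we will have the greedy step cut a new block only once the greedy schedule has accumulated at least $m\Delta/\eta$ jobs since the last cut. Greedy is a $2$-approximation on each interval, so the optimum restricted to each block is comparable to the greedy count there. Summing over blocks gives $\OPT\gtrsim m\Delta|\B|/\eta$, up to constants that are absorbed into $K_0$. For $|O'|\ge K|\Scal|/\eta^8$, superblocks at level $i$ will be unions of about $\Gamma=2K_0/\eta^7$ blocks of the previous level. Each previous-level block carries at least $K_{i-1}\cdot\Omega(1)$ reference executions, by the density inherited inductively. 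A superblock therefore carries at least $\Gamma K_{i-1}\Omega(1)=\Omega(K_i/\eta^{7})$ executions. The extra factor $1/\eta$ is obtained by choosing the constant in $\Gamma$ (the factor $2K_0$) so that the $O(1)$ losses and the $(1-2\eta)$ factor are covered. I expect this bookkeeping to be the main obstacle. Density can fail in the few superblocks where the charging argument deleted many executions, so I would first try to merge such sparse superblocks into a neighbor. This keeps the superblock count bounded by $\eta^8|O'|/K$ while only shrinking the spanned sets, and shrinking spanned sets preserves eligibility, so (b) is not harmed.

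The third step is the threshold case. The density requirements in (d) need $\OPT$ large compared with $K|\Scal|$. With $|\Scal|\ge1$ and $K\le\Delta\Gamma^q$, the bound $\OPT\ge H=\lceil 5K_0\Delta\Gamma^q\rceil$ makes the construction nonvacuous, and the factor $5$ covers the losses $(1-2\eta)$ and greedy's factor $2$. If instead $\OPT<H$, an optimal solution has fewer than $H$ jobs. We can then guess the job set, in $n^{O(H)}$ ways, and verify each guess by a dynamic program over orderings and machine assignments. Alternatively, we can sweep time $0,\dots,T$ with color coding, using $2^{O(H)}$ colors and machine states $(2m)^{O(H)}$, which gives the pseudopolynomial bound $(2m)^{O(H)}\poly(n,T,N)$. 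Both of these follow the corresponding small-OPT routines of \citep{ArmbrusterGrandoniTinguelyWiese2026}. Finally, the running time is polynomial in $N$ and $\Delta$ for fixed $m,\eta$, because each of the $q$ levels performs greedy passes and groupings of polynomial size.
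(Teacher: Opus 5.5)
\textbf{Gap in Step 2 (density bound).} Your Step~2 has a genuine gap, exactly where you expected trouble. You try to derive $|O'|\ge K|\Scal|/\eta^8$ from a per-block density ``inherited inductively'' (each previous-level block carries $\Omega(K_{i-1})$ reference executions). No such local density exists. The imported elementary construction only gives the aggregate bound $|O''|\ge m|\B_0|/\eta$, and individual blocks or superblocks may contain no reference executions at all.

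Your patch fails for two reasons. First, which superblocks are sparse depends on the unknown reference schedule, so the merge is not part of a computable construction. Second, merging makes superblocks larger, so fewer jobs span them. That turns eligible interior executions into ineligible ones, so (b) is harmed, not preserved. Modifying the greedy to cut only after $m\Delta/\eta$ jobs is also unnecessary. It would force you to redo the imported $K_0$ and alignment-loss analysis.

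\textbf{The missing idea: (d) is pure counting.} Keep the elementary partition $\B_0$ and schedule $O''$ unchanged. They give:
\begin{itemize}
  \item alignment for every coarsening;
  \item $|O''|\ge(1-\eta)\OPT$;
  \item at most $K_0$ executions per elementary block;
  \item $|O''|\ge m|\B_0|/\eta$.
\end{itemize}
Pad the list of elementary blocks to a length $b$ that is a multiple of $2\Delta\Gamma^q$. This is where $\OPT\ge H$ is used: it gives $|\B_0|\ge|O''|/K_0\ge4\Delta\Gamma^q$. Hence $b\le3|\B_0|/2$, and so $|O'|\ge(1-\eta)|O''|\ge mb/(2\eta)$.

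At level $\ell$, let blocks be groups of exactly $2\Delta\Gamma^{\ell-1}$ entries, and let superblocks be groups of $\Gamma$ such blocks. Then
\[
K_\ell=2\Delta K_0\Gamma^{\ell-1}\le\Delta\Gamma^q,\qquad
|\B_\ell|\le\frac{b}{2\Delta},\qquad
K_\ell|\Scal_\ell|\le\frac{K_0b}{\Gamma}=\frac{\eta^7b}{2}.
\]
Both inequalities in (d) now follow directly from $|O'|\ge mb/(2\eta)$. No local density is needed, and nothing depends on which executions were deleted.

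\textbf{Related gap in Step 1 (charging to levels).} You cannot charge boundary-crossing executions to a unique level. The partitions are nested, so an execution crossing a coarse boundary crosses every finer one. Your claim that ``the levels are disjoint'' is therefore false for crossings. Alignment must come from the elementary partition, and only eligibility violations should be charged.

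Uniqueness for eligibility violations needs its own argument. Suppose an execution is ineligible at level $\ell$.
\begin{itemize}
  \item At any finer level, its superblock lies inside its level-$\ell$ block, so that superblock is spanned.
  \item At any coarser level, its block contains its level-$\ell$ superblock, hence a window endpoint, so that block is a boundary block.
\end{itemize}
So each execution is ineligible at most once. Some level $\ell^*$ then has at most $\eta|O''|$ violations, and deleting them gives (a) with $(1-\eta)^2\ge1-2\eta$.
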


The implementation tests the relevant cardinalities, 
so it does not require prior knowledge of $\OPT$.
With $\Delta=1$, both $H$ and $K$ depend only on $m,\eta$.
With $\Delta=O_\eta(\log(T+1))$, both are $O_{m,\eta}(\log(T+1))$.
The latter choice makes the second exact routine pseudopolynomial for fixed $m,\eta$, 
and leads to the second implementation regime.
The density bounds are aggregate bounds: they allow individual blocks and
superblocks to contain few or no reference executions.
They control losses proportional to the number of blocks or superblocks in the later rounding analysis.

\subsection{Configurations and placement mass}

A \emph{configuration} in a block $B$ is a set of at most $K$ jobs eligible in $B$ that 
admits a feasible schedule on the $m$ machines entirely inside $B$.
In particular, each job $j$ must execute within $B\cap[r_j,d_j)$.
Let $\Ccal(B)$ be the family of these sets, including the empty set.
For every configuration used by the algorithm, 
retain a feasible schedule witnessing its membership in $\Ccal(B)$.

Introduce a nonnegative variable $x_{B,C}$ for every $C\in\Ccal(B)$.
We impose $\sum_{C\in\Ccal(B)}x_{B,C}=1$ for every $B\in\B$,
so $x_{B,C}$ is the probability of choosing configuration $C$ in block $B$.
The \emph{placement marginal} of job $j$ at $B$ and its total \emph{fractional mass} are
\[
  y_{j,B}:=\sum_{\substack{C\in\Ccal(B)\\j\in C}}x_{B,C},
  \qquad y_j:=\sum_{B\in\B}y_{j,B}.
\]
Ineligible coordinates are zero. All relaxations impose $y_j\le1$ for every job.
The quantity $y_j$ counts expected appearances across blocks; 
this inequality alone does not prevent two blocks from selecting the same job.
Write
\[
  F:=\sum_j y_j=\sum_{B\in\B}\sum_{C\in\Ccal(B)}|C|x_{B,C}
\]
for the \emph{fractional throughput}. We also impose
\[
  F\ge\frac{m\Delta|\B|}{\eta},
  \qquad F\ge\frac{K|\Scal|}{\eta^8}.
\]
For the candidate certified by Lemma~\ref{lem:partition}, 
the configurations of $O'$ give a feasible solution with $F=|O'|$.
For example, the second inequality ensures $5K|\Scal|/\eta^6\le5\eta^2F$, 
which controls the additive error in the interior matching bound of Lemma~\ref{lem:interior}.
Candidates with an infeasible strengthened relaxation are discarded;
we verify below that the reference schedule remains feasible after strengthening.

For a global job, separate its \emph{boundary mass} from its \emph{interior mass}:
\[
  a_j:=y_{j,B_L(j)},\qquad b_j:=y_{j,B_R(j)},\qquad
  z_{j,S}:=\sum_{\substack{B\subseteq S\\
                          B\notin\{B_L(j),B_R(j)\}}}y_{j,B},
  \qquad z_j:=\sum_{S\in\Scal}z_{j,S}.
\]
Eligibility implies that $z_{j,S}>0$ only if $j$ spans $S$.
Boundary blocks are excluded from $z_{j,S}$ even when their superblock is
fully spanned, so no placement is counted twice and
$a_j+b_j+z_j=y_j\le1$.
Define the aggregate quantities
\[
  L:=\sum_{j\text{ local}}y_j,\qquad
  G:=\sum_{j\text{ global}}y_j,\qquad
  Z:=\sum_{j\text{ global}}z_j,\qquad
  W:=\sum_{j\text{ global}}a_jb_j.
\]
Thus $F=L+G$.

If block configurations are sampled independently, 
$a_jb_j$ is the probability of a \emph{boundary collision}: 
job $j$ is selected at both boundary blocks.
Since $a_jb_j\le(a_j+b_j)/4\le y_j/4$, we have $W\le G/4$.
For each global job, define the \emph{boundary deficit}
\[
  e_j:=\frac{a_j+b_j}{4}-a_jb_j
      =\frac{(a_j-b_j)^2+(a_j+b_j)(1-a_j-b_j)}{4}\ge0.
\]
The aggregate deficit is
\begin{equation}\label{eq:deficit}
  \mathcal D:=\frac G4-W=\frac Z4+\sum_{j\text{ global}} e_j.
\end{equation}
This identity separates the saving relative to the collision bound $G/4$
into interior mass and the contribution of the boundary marginals.
The displayed formula for $e_j$ accounts for both unequal boundary marginals
and total boundary mass strictly between zero and one.
For a job with $y_j>0$, its contribution $z_j/4+e_j$ vanishes exactly 
when $a_j=b_j=1/2$ and $z_j=0$.
The later analysis uses this deficit to bound the loss from restricting jobs
with a small boundary marginal to one boundary.

\subsection{Consistency on triples and noncrossing boundaries}

Form a graph whose vertices are the blocks $B_1,\ldots,B_{|\B|}$ in time order.
For each global job $j$, add an edge labelled by $j$ between $B_L(j)$ and $B_R(j)$. 
This is the job's \emph{boundary edge}; it represents its two boundary placements. 
Distinct jobs with the same boundary pair give parallel edges, and local jobs give no edge.

Two boundary edges \emph{cross} if their endpoints strictly interleave.
After possibly exchanging the jobs, 
their boundary pairs then have the form $(B_u,B_v)$ and $(B_s,B_t)$ with $u<s<v<t$.
A family is \emph{noncrossing} if it contains no such pair.
Nested edges, shared endpoints, and parallel edges are allowed.
Figure~\ref{fig:boundary-graph} gives examples.
We will coordinate the configurations at the endpoints of a selected noncrossing family.

\begin{figure}[htbp]
\centering
\begin{tikzpicture}[x=1.05cm,y=.9cm,font=\small]
  \foreach \off/\paneltitle in {0/Crossing,4.5/Nested,9/Shared endpoint}{
    \draw[gray!50] (\off,0)--(\off+3,0);
    \foreach \i in {1,...,4}{
      \fill (\off+\i-1,0) circle (1.7pt);
      \node[below] at (\off+\i-1,0) {$B_{\i}$};
    }
    \node[above] at (\off+1.5,1.48) {\paneltitle};
  }
  \draw[blue!70!black,thick]
    (0,0) .. controls (0,1.15) and (2,1.15) .. (2,0);
  \draw[orange!85!black,thick]
    (1,0) .. controls (1,1.15) and (3,1.15) .. (3,0);
  \node[blue!70!black] at (.45,.83) {$j$};
  \node[orange!85!black] at (2.55,.83) {$k$};
  \draw[blue!70!black,thick]
    (4.5,0) .. controls (4.5,1.65) and (7.5,1.65) .. (7.5,0);
  \draw[orange!85!black,thick]
    (5.5,0) .. controls (5.5,.72) and (6.5,.72) .. (6.5,0);
  \node[blue!70!black,fill=white,inner sep=1pt] at (5.15,1.06) {$j$};
  \node[orange!85!black] at (6,.74) {$k$};
  \draw[blue!70!black,thick]
    (9,0) .. controls (9,1.15) and (11,1.15) .. (11,0);
  \draw[orange!85!black,thick]
    (11,0) .. controls (11,.72) and (12,.72) .. (12,0);
  \node[blue!70!black] at (10,1.04) {$j$};
  \node[orange!85!black] at (11.5,.74) {$k$};
\end{tikzpicture}
\caption{Three separate examples of boundary edges, labelled by their jobs.
Edges describe boundary pairs, not execution intervals.}
\label{fig:boundary-graph}
\end{figure}

We first describe \emph{joint distributions} on complete configurations.
For each nonempty set $U\subseteq\B$ with $|U|\le3$, define
\[
  \Omega_U:=
  \left\{(C_B)_{B\in U}:
    C_B\in\Ccal(B),\
    C_B\cap C_{B'}=\varnothing\text{ for distinct }B,B'\in U
  \right\}.
\]
A tuple in $\Omega_U$ chooses one configuration per block of $U$, 
without selecting any job twice within that tuple.
The variable $\pi_U(\boldsymbol C)\ge0$ is the probability assigned to this tuple;
tuples outside $\Omega_U$ have probability zero.
For $V\subseteq U$, write $\boldsymbol C|_V$ for the tuple obtained by 
keeping only the configurations at blocks in $V$.
The \emph{marginal distribution} on $V$ sums the probabilities of all tuples
that make the same choices on $V$, regardless of their choices outside $V$.
The distributions are \emph{consistent} when this marginal equals $\pi_V$.
Formally, impose
\begin{align*}
  \sum_{\boldsymbol C\in\Omega_U}\pi_U(\boldsymbol C)&=1,
    &&\varnothing\ne U\subseteq\B,\quad |U|\le3,\\
  \sum_{\substack{\boldsymbol C\in\Omega_U\\
                  \boldsymbol C|_V=\boldsymbol D}}
       \pi_U(\boldsymbol C)&=\pi_V(\boldsymbol D),
    &&\varnothing\ne V\subsetneq U,\quad |U|\le3,\quad
      \boldsymbol D\in\Omega_V,\\
  \pi_{\{B\}}(C)&=x_{B,C},
    &&B\in\B,\quad C\in\Ccal(B).
\end{align*}
The last equation preserves the configuration distribution at each block.
The middle equation also requires agreement on every shared pair of blocks,
not just on the blocks separately.
This is a local consistency strengthening of the configuration LP
\citep{SheraliAdams1990,Laurent2003}.

For example, consider $U=\{B_1,B_2,B_3\}$ and $V=\{B_1,B_3,B_4\}$.
Let $\ell_1$ and $\ell_3$ be local jobs feasible in $B_1$ and $B_3$, respectively,
and let the other two blocks use empty configurations.
Without the pair-consistency equations, 
two distributions could induce the following probabilities on their shared blocks:
\begin{center}
\small
\begin{tabular}{@{}cccc@{}}
\toprule
$C_{B_1}$ & $C_{B_3}$ & From the distribution on $U$
                       & From the distribution on $V$\\
\midrule
$\varnothing$ & $\varnothing$ & $1/2$ & $0$\\
$\{\ell_1\}$  & $\varnothing$ & $0$   & $1/2$\\
$\varnothing$ & $\{\ell_3\}$  & $0$   & $1/2$\\
$\{\ell_1\}$  & $\{\ell_3\}$  & $1/2$ & $0$\\
\bottomrule
\end{tabular}
\end{center}
Both distributions select each local job with probability $1/2$, 
but they disagree on whether the two jobs are selected together.
They therefore cannot describe the same choices at $B_1,B_3$.
Requiring a common distribution $\pi_{\{B_1,B_3\}}$ excludes this inconsistency.

The reference schedule $O'$ satisfies all these constraints: 
in each block, give probability one to the configuration used by $O'$; for every $U$, 
give probability one to the corresponding tuple.
No job appears twice in $O'$, so each tuple belongs to $\Omega_U$ and 
its restrictions agree with the tuples used on smaller block sets.
Thus the strengthened relaxation still contains 
a feasible solution of value $|O'|$ satisfying the density bounds.

For \emph{projected consistency}, we record only selected job identities.
At each block $B$, fix a set of \emph{tracked jobs}
\[
  H_B\subseteq
  \{j:\ j\text{ is global and }B\in\{B_L(j),B_R(j)\}\}.
\]
The \emph{projected state} of a configuration is
$\Psi_B(C):=C\cap H_B$.
For instance, if $H_B=\{j\}$, feasible configurations $\{j\}$ and $\{j,k\}$
with $k\ne j$ both have projected state $\{j\}$.
We index the states by all subsets of $H_B$, 
without testing which subsets arise from configurations.
For $1\le|U|\le3$, let
\[
  \Omega_U^H:=
  \left\{(A_B)_{B\in U}:
    A_B\subseteq H_B,\
    A_B\cap A_{B'}=\varnothing\text{ for distinct }B,B'\in U
  \right\}.
\]
In this version, $\pi_U(\boldsymbol A)\ge0$ denotes 
the probability of a tuple of recorded subsets. Write $\boldsymbol A|_V=(A_B)_{B\in V}$.
The constraints are
\begin{align*}
  \sum_{\boldsymbol A\in\Omega_U^H}\pi_U(\boldsymbol A)&=1,
    &&\varnothing\ne U\subseteq\B,\quad |U|\le3,\\
  \sum_{\substack{\boldsymbol A\in\Omega_U^H\\
                  \boldsymbol A|_V=\boldsymbol D}}
       \pi_U(\boldsymbol A)&=\pi_V(\boldsymbol D),
    &&\varnothing\ne V\subsetneq U,\quad |U|\le3,\quad
      \boldsymbol D\in\Omega_V^H,\\
  p_B(A):=\pi_{\{B\}}(A)
    &=\sum_{\substack{C\in\Ccal(B)\\\Psi_B(C)=A}}x_{B,C},
    &&B\in\B,\quad A\subseteq H_B.
\end{align*}
Thus $p_B(A)$ is the probability that block $B$ records exactly the subset $A$.
If no configuration has that state, the last sum is empty and $p_B(A)=0$.
Nonnegativity and the second equation then force every tuple containing
that state to have probability zero.
This permits indexing the projected distributions without enumerating full
configurations. Projecting the configurations of $O'$ 
gives a feasible reference solution here as well.
Throughout this definition, $x$ is one fixed collection of block configuration distributions; 
Section~\ref{sec:implementation} explains how to construct the required collections and tracked sets.

We now explain why these distributions suffice for a noncrossing family.
For $|\B|\ge3$, place the block vertices, in time order, around a convex polygon.
Draw one chord for each distinct boundary pair in the chosen family, 
keeping all job labels when several jobs share a pair.
The noncrossing chords can be completed to a \emph{triangulation}, 
a subdivision of the polygon into triangles by nonintersecting diagonals.
The added sides and diagonals are used only to organize sampling; 
they do not represent additional jobs.
Every triangle involves at most three blocks, 
so the relaxation supplies its joint distribution.
Figure~\ref{fig:approach-gluing} illustrates the triangulation and the order
of sampling. In the full version, for example, after sampling the central
triangle $\{B_1,B_3,B_5\}$, sample the configuration at $B_2$ from the distribution
on $\{B_1,B_2,B_3\}$, conditional on the configurations already chosen at $B_1$ and $B_3$.

\begin{lemma}\label{lem:gluing}
Fix a set of global jobs whose boundary edges are noncrossing.
The full consistency constraints permit jointly sampling one configuration $C_B$ per block, with
\[
  \Pr[C_B=C]=x_{B,C},
  \qquad B\in\B,\ C\in\Ccal(B),
\]
such that no chosen job $j$ belongs to both $C_{B_L(j)}$ and $C_{B_R(j)}$.
Under projected consistency, the same conclusion holds if 
every chosen job satisfies $j\in H_{B_L(j)}\cap H_{B_R(j)}$.
\end{lemma}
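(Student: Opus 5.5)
The plan is to build the joint sample along a triangulation and use the tree structure of its dual graph to glue the three-block distributions. We first dispose of small cases. If $|\B|=1$, sample $C_{B_1}$ from $x$. If $|\B|=2$, sample the pair from $\pi_{\{B_1,B_2\}}$. Its single-block marginals equal $x$ by consistency, and its support lies in $\Omega_{\{B_1,B_2\}}$, so no job appears in both blocks. For $|\B|\ge3$, place the blocks around a convex polygon in time order. The chords of the chosen family are then noncrossing: two chords of a convex polygon cross geometrically exactly when their endpoints strictly interleave in the cyclic order, and this coincides with the linear time order. Extend these chords to a triangulation $\mathcal T$; every polygon side and every chosen chord is an edge of $\mathcal T$. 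The dual graph, with one vertex per triangle and adjacency when two triangles share a diagonal, is a tree.

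Fix a root triangle $\tau_0$ and order the triangles by a traversal (BFS or DFS) of the dual tree. Sample $(C_B)_{B\in\tau_0}$ from $\pi_{\tau_0}$. When we reach a triangle $\tau=\{B,B',B''\}$ whose parent shares the diagonal $\{B,B'\}$, the configurations at $B,B'$ are already fixed. The vertex $B''$ is new: each diagonal separates the polygon, so the vertices on the far side of $\{B,B'\}$ have not been visited. We sample $C_{B''}$ from $\pi_\tau$ conditioned on $(C_B,C_{B'})$. This conditional law is well defined, because the conditioning event has positive probability under $\pi_{\{B,B'\}}$, and $\pi_{\{B,B'\}}$ is exactly the $\{B,B'\}$-marginal of $\pi_\tau$ by the pair-consistency equations.

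We then prove by induction over the traversal that, after processing any set of triangles, the joint law restricted to each processed triangle $\tau$ equals $\pi_\tau$. The base case is the root. For the step, assume the pair $(C_B,C_{B'})$ currently has law $\pi_{\{B,B'\}}$; this holds because the parent triangle has law $\pi_{\mathrm{parent}}$ and consistency gives the pair marginal. Then the pair law times the conditional of $\pi_\tau$ gives exactly $\pi_\tau$. Previously processed triangles are unaffected, since we only add a new coordinate. Every block lies in some triangle, so $\Pr[C_B=C]=\pi_{\{B\}}(C)=x_{B,C}$. Each chosen job's boundary pair is an edge of some triangle $\tau$, and $\pi_\tau$ is supported on $\Omega_\tau$, whose tuples are pairwise disjoint. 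Hence $j\notin C_{B_L(j)}\cap C_{B_R(j)}$ almost surely.

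For projected consistency, we run the same scheme on projected states $A_B$ using $\pi_U$ over $\Omega_U^H$, obtaining $\Pr[A_B=A]=p_B(A)$. Afterwards, each $C_B$ is sampled independently, conditional on $A_B$, from $x_{B,\cdot}$ restricted to $\{C:\Psi_B(C)=A_B\}$. This conditional law is well defined because $p_B(A)>0$ whenever $A$ occurs. The marginal of $C_B$ is then $\sum_A p_B(A)\,x_{B,C}[\Psi_B(C)=A]/p_B(A)=x_{B,C}$. For a chosen job with $j\in H_{B_L(j)}\cap H_{B_R(j)}$, membership $j\in C_B$ for $B\in\{B_L(j),B_R(j)\}$ is equivalent to $j\in A_B$. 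The projected triangle tuples are disjoint, so no collision occurs.

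The main obstacle is the induction step: we must verify that conditioning only on the shared pair, not on the entire history, still yields the correct law on each new triangle. This rests on two facts. First, the new vertex is fresh, which follows from the tree and separation structure of triangulations. Second, the pair law being exactly $\pi_{\{B,B'\}}$ is where full pair consistency, rather than mere single-block agreement, is essential.
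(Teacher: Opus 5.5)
Your proposal is correct and follows the paper's proof. You triangulate with the chosen chords, root the dual tree, extend each child triangle conditionally on its shared pair (using pair consistency inductively), and in the projected version reconstruct configurations independently per block; the only cosmetic difference is that you justify freshness of the third vertex by the separation property of a diagonal, whereas the paper uses that the triangles at a fixed vertex form a connected path in the dual tree.
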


\begin{proof}
For $|\B|\ge3$, use a triangulation containing the chosen boundary edges.
Form a second graph with one vertex per triangle and 
an edge between triangles sharing an internal diagonal. 
This graph is connected and has $|\B|-2$ vertices and $|\B|-3$ edges, so it is a tree. 
The triangles containing any fixed block vertex occur consecutively around that vertex 
and hence form a connected path in this tree.
Consequently, when a triangle is visited after its parent, 
its previously assigned blocks are exactly the two blocks it shares with that parent.
For fewer than three blocks, use a single tree node containing all blocks.

Choose a root node and visit parents before children.
The following argument uses tuples of configurations in the full version
and tuples of projected subsets in the projected version.
Sample the root tuple from its prescribed distribution.
For a child triangle with block set $U$ and parent with block set $V$, 
put $I=U\cap V$ and let $\boldsymbol d$ be the choices already fixed on $I$.
Keep those choices and sample an extending tuple $\boldsymbol c$ with probability
\[
  \frac{\pi_U(\boldsymbol c)}{\pi_I(\boldsymbol d)}
  \qquad\text{for }\boldsymbol c|_I=\boldsymbol d.
\]
Consistency makes these conditional probabilities sum to one.
Inductively, the shared pair has distribution $\pi_I$, 
so a pair with $\pi_I(\boldsymbol d)=0$ is never reached.
For every tuple with positive denominator, its probability on the child is
\[
  \pi_I(\boldsymbol d)\,
  \frac{\pi_U(\boldsymbol c)}{\pi_I(\boldsymbol d)}
  =\pi_U(\boldsymbol c).
\]
Thus every triangle used in the construction has its prescribed distribution,
and each block receives one consistent choice.
This is the standard conditional construction for distributions arranged
on a tree~\citep{LauritzenSpiegelhalter1988}.

In the full version, these choices are already configurations.
In the projected version, after choosing all subsets, 
sample independently at each block conditional on its chosen subset:
for $p_B(A)>0$, use
\[
  \Pr[C_B=C\mid A_B=A]=
  \begin{cases}
    x_{B,C}/p_B(A),&\Psi_B(C)=A,\\
    0,&\text{otherwise}.
  \end{cases}
\]
Only subsets with positive $p_B(A)$ can be reached.
For any configuration $C$ of positive mass, its unconditional probability is
$p_B(\Psi_B(C))x_{B,C}/p_B(\Psi_B(C))=x_{B,C}$.
This reconstruction uses the same $x$ as the projected constraints.

The endpoints of every chosen boundary edge belong to one sampled block set:
a triangle when $|\B|\ge3$, and the single set otherwise.
In the full version, disjointness of its configurations prevents the job from appearing at both endpoints.
In the projected version, the job is tracked at both endpoints, 
so disjointness of the recorded subsets gives the same conclusion after reconstruction.
The construction uses $O(|\B|)$ draws from finite distributions.
\end{proof}

The relaxation is solved before the noncrossing family is selected.
Including distributions on all sets of at most three blocks therefore 
lets the rounding use any triangulation containing the selected family.
The construction preserves each block's configuration distribution and 
the distributions on the triangles it uses: 
these concern full configurations in the full version and recorded subsets in the projected version.
It need not preserve the distributions assigned to other pairs or triples.
For example, no triangle in Figure~\ref{fig:approach-gluing} contains both $B_2$ and $B_4$. 
The construction need not preserve the prescribed joint distribution $\pi_{\{B_2,B_4\}}$.

The sampled configurations need not yet form a feasible schedule, 
because jobs can still appear more than once outside the coordinated boundary pairs.
Section~\ref{sec:rounding} obtains a feasible schedule by deleting global interior executions, 
restricting every global job outside the coordinated family to one fixed boundary, and retaining local executions.
The lemma supplies the remaining guarantee: each job allowed at both boundaries is selected at most once.

\section{Rounding the configuration solution}\label{sec:rounding}

Fix a feasible solution of the relaxation in Section~\ref{sec:prelim}, 
including the configuration probabilities $x_{B,C}$ and their stored feasible schedules.
Write $x_B=(x_{B,C})_{C\in\Ccal(B)}$ for the distribution at block $B$.
We use either full consistent distributions or projected consistent distributions
linked to this same $x$, with the supplied tracked sets $H_B$.
Both rounding procedures therefore have the same placement marginals $y_{j,B}$
and the same quantities $F,L,G,Z,W$, and $\mathcal D$.

A global job is \emph{good} if $a_j,b_j\ge1/4$, and \emph{bad} otherwise.
For projected consistency, define
\[
  J_{\rm miss}:=
  \{j\text{ good}:j\notin H_{B_L(j)}\cap H_{B_R(j)}\}.
\]
These jobs are not tracked at both boundaries.
For full consistency, set $J_{\rm miss}=\varnothing$.

The first samples block configurations independently and 
computes a maximum job--slot matching. 
The second coordinates selected boundary placements.
Let $A_\times$ and $A_{\rm keep}$ be the numbers of jobs in their respective schedules. 
Algorithm~\ref{alg:rounding} summarizes the two constructions and returns the larger schedule.

\begin{algorithm}[htbp]
\caption{Rounding a consistent configuration solution}
\label{alg:rounding}
\begin{algorithmic}[1]
\Statex \textbf{Input:} A feasible configuration solution from
Section~\ref{sec:prelim}, with probabilities $x_{B,C}$ and stored schedules;
full consistent distributions, or projected consistent distributions linked to this same $x$ with fixed tracked sets $H_B$.
\Statex \textbf{Output:} A feasible schedule.
\State Compute $a_j,b_j$, the good and bad jobs, and $J_{\rm miss}$; 
use $J_{\rm miss}=\varnothing$ in the full version.
\State Construct the crossing graph on all good jobs and choose a maximal matching $\mathcal X$.
\State Independently sample a configuration in each block from $x_B$.
Use the stored schedules to form slots and compute a maximum job--slot
matching. Retain its schedule and cardinality $A_\times$.
\State Apply Lemma~\ref{lem:gluing} to jointly sample block configurations,
coordinating the good jobs that are unmatched by $\mathcal X$ and outside $J_{\rm miss}$. 
In the projected version, reconstruct full configurations 
conditionally on the sampled states as specified in that lemma.
\State In this second sample, retain all local executions and delete all global interior executions.
\State For each bad job, matched good job, or job in $J_{\rm miss}$, 
retain its sampled execution only at its more likely boundary, if present, with fixed tie-breaking. 
For each remaining good job, retain its boundary execution if present. 
Let $A_{\rm keep}$ be the resulting cardinality.
\State \Return the larger of the schedules of sizes $A_\times$ and $A_{\rm keep}$.
\end{algorithmic}
\end{algorithm}

\subsection{Independent sampling and interior matching}

Independently for each block $B$, choose a configuration $C_B$ with probability $x_{B,C_B}$ and use its stored schedule.
A \emph{slot} is one execution interval in this schedule, together with its machine index. 
We also record the job that produced the slot.
Form a bipartite graph with all jobs on one side and the sampled slots on the other. 
Job $j$ is adjacent to a slot $[u,v)$ exactly when $\max\{r_j,u\}+p_j\le\min\{d_j,v\}$.
Thus an adjacent job fits in that slot within its release time and deadline.
A \emph{maximum job--slot matching} gives the first schedule: 
execute each matched job on its slot's machine, starting at $\max\{r_j,u\}$.
Slots on the same machine are disjoint, and each job is matched at most once,
so this schedule is feasible and has cardinality $A_\times$.
The graph includes every compatible job--slot pair, 
even when the placement does not satisfy the additional eligibility restriction of the relaxation.

A slot produced by a global job is a \emph{boundary slot} 
if its block is one of that job's boundary blocks, and an \emph{interior slot} otherwise.
This classification refers to the job that produced the slot, 
including when the slot is reassigned to another job.
Every interior slot lies in a superblock spanned by the job that produced it.
A boundary slot remains excluded from this category even if its job spans the entire superblock.

For the analysis, first form an integral matching that 
assigns each sampled local job to its own slot and 
each global job appearing at a boundary to one of its own boundary slots.
Independent sampling selects a global job at least once at its boundaries with probability $a_j+b_j-a_jb_j$.
The expected size of this initial matching is therefore
\[
  L+\sum_{j\text{ global}}(a_j+b_j-a_jb_j)=F-Z-W.
\]
It uses no interior slots. Write $D_j:=(1-a_j)(1-b_j)$
for the probability that a global job is \emph{absent at both boundaries},
meaning $j\notin C_{B_L(j)}$ and $j\notin C_{B_R(j)}$.
Absence here refers to the sampled configurations, before the matching is run.

A \emph{fractional matching} assigns nonnegative numbers $f_{j,\sigma}$ to job--slot pairs, 
with $f_{j,\sigma}=0$ for incompatible pairs, subject to
\[
  \sum_\sigma f_{j,\sigma}\le1\quad\text{for every job }j,
  \qquad
  \sum_j f_{j,\sigma}\le1\quad\text{for every slot }\sigma.
\]
Define $f_j:=\sum_\sigma f_{j,\sigma}$, the total fractional assignment to job $j$, 
and $|f|:=\sum_j f_j$, the value of the matching.
These numbers are used only to lower-bound the maximum integral matching.
We write $\mathbf1\{E\}$ for the indicator of an event $E$.

The next lemma recovers interior mass using jobs absent at both boundaries.
Its bound on each $f_j$ will control the amount that a later crossing exchange must replace.

\begin{lemma}\label{lem:interior}
For every outcome of independent block sampling, 
there is a fractional matching $f$ using only interior slots and assigning positive mass only to global jobs.
For every global job $j$,
\[
  f_j\le\frac{z_j}{D_j}
       \mathbf1\{j\text{ is absent at both boundaries}\},
\]
where the right-hand side is defined to be zero when $D_j=0$.
Moreover,
\[
  \E|f|\ge(1-6\eta)Z-\frac{5K|\Scal|}{\eta^6}.
\]
The union of $f$ and the initial integral matching is a feasible fractional matching.
\end{lemma}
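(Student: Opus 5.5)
The plan is to build $f$ one superblock at a time, using the fractional grouping of \citep{ArmbrusterGrandoniTinguelyWiese2026}, and then to sum the per-superblock errors.

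\textbf{Construction within one superblock.} Fix $S\in\Scal$. Its interior slots come from global jobs $k$ that span $S$ and are sampled in a block $B\subseteq S$ that is not a boundary block of $k$.

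\begin{enumerate}[label=(\roman*),leftmargin=*]
\item Every such slot has length exactly $p_k$ and lies inside $S$. Any global job $j$ spanning $S$ with $p_j\le p_k$ is therefore compatible with it.
\item Consider the jobs $j$ spanning $S$ with $z_{j,S}>0$. Order them by nonincreasing processing time, and lay their masses $z_{j,S}$ consecutively on a line of total length $Z_S:=\sum_j z_{j,S}$.
\item Cut this line into consecutive groups $g_1,g_2,\dots$, each of mass about $\eta^{c}K$ for a suitable power $c$.
\item The slot supply of group $g_i$ is the random number of interior slots produced by jobs of $g_i$. Its expectation is the group's mass, since each sampled interior copy is one slot and $\E[\#\text{copies}]=\sum_{B}y_{j,B}$.
\item Group $g_{i+1}$ consists of jobs no longer than those of $g_i$. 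So group $g_{i+1}$ is fed fractionally from the slots produced by $g_i$. This is the usual one-group shift, and it loses one group's mass per superblock.
\end{enumerate}

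\textbf{Demand side and the pointwise bound.} Give each job $j$ in the demand group the demand $z_{j,S}/D_j$, but only when $j$ is absent at both boundaries; otherwise its demand is zero. Two facts make this work.

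\begin{itemize}[leftmargin=*]
\item Absence at both boundaries depends only on the two boundary blocks, which lie outside $S$, or at least outside the blocks counted in $z_{j,S}$. Hence the expected demand is exactly $z_{j,S}$. Summing over $S$ then gives $f_j\le(z_j/D_j)\mathbf 1\{\text{absent}\}$ automatically.
\item Cap every job at total $1$ and every slot at $1$. Then assign demand to supply fractionally, proportionally within the group: if supply $\ge$ demand, serve all demand; otherwise scale everything by supply/demand. This keeps each $f_j$ below its demand.
\end{itemize}

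Feasibility of the union with the initial integral matching is then immediate. The initial matching uses only boundary slots and local slots. Moreover, $f$ is supported on jobs absent at both boundaries, and such jobs are never matched initially.

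\textbf{Expected value, the main obstacle.} Loosely, $\E|f|\ge\sum_S\E[\min(\text{supply},\text{demand})]$ minus the group-shift losses. The difficulty is concentration.

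\begin{itemize}[leftmargin=*]
\item Supply is a sum over independently sampled blocks, each contributing at most $K$ slots, so it is a bounded-differences sum.
\item Demand is a sum of scaled indicators $z_{j,S}/D_j$. These are \emph{not} bounded a priori, because $D_j$ may be tiny. I would handle this by first truncating: discard jobs with $D_j<\eta$ or with $z_{j,S}/D_j$ large. Since $z_j\le 1-a_j-b_j\le D_j$ (as $(1-a)(1-b)\ge1-a-b$), the ratio $z_j/D_j\le1$. That gives boundedness for free, which removes the worry.
\item Demand indicators depend on boundary blocks that may be shared among jobs. So I would group the jobs by boundary block and apply Chebyshev's inequality. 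Each block affects at most $O(K)$ tracked jobs at marginal-weighted level, which bounds the variance by $O(K\cdot\text{mean})$.
\item With group mass about $K/\eta^{5}$, Chebyshev gives $\E\min(\text{supply},\text{demand})\ge(1-O(\eta))\cdot\text{mass}$ for every full group.
\item Partial groups, the shift loss, and the cap losses cost at most $O(K/\eta^6)$ per superblock. This yields the term $5K|\Scal|/\eta^6$.
\item The multiplicative losses (the Chebyshev deviation and the scaling) yield the factor $(1-6\eta)$.
\end{itemize}

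The delicate part is setting the group size so that the variance bound, which uses the correlation between demand and supply through shared blocks, fits within the $6\eta$ budget. If Chebyshev turns out too weak, I would fall back to a Chernoff bound on supply alone plus the linearity bound on the demand's mean.
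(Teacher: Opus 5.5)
Most of your outline is sound and close to the paper's proof. The matching points are: ordering by nonincreasing $p_j$, feeding each group from the interior slots of the preceding group, the demand $Y_{j,S}=(z_{j,S}/D_j)\mathbf1\{j\text{ absent at both boundaries}\}$, the observation $z_j\le D_j$, and the disjointness argument for the union. Using groups of fixed mass about $K/\eta^5$ instead of the paper's $q=1/\eta$ groups of mass $\eta Z_S\pm1$ is a legitimate variant. One small correction: $\E Y_{j,S}=z_{j,S}$ holds because the two boundary blocks are distinct and sampled independently, not because they avoid the interior blocks.

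\textbf{The gap: the variance of the demand.} The missing step is the concentration of a group's demand $N_\ell=\sum_{j\in J_\ell}Y_{j,S}$, where $n_\ell$ is the group's mass. This is where the paper's proof does its real work. You propose to group jobs by boundary block and apply Chebyshev, saying each block affects $O(K)$ jobs ``at marginal-weighted level.'' A pairwise-covariance (dependency-graph) bound needs a bound on the \emph{number} of variables that depend on a block, and there is none. A single block can be a boundary block of arbitrarily many jobs of the group, each with tiny $y_{j,B}$. The inequality $\sum_j y_{j,B}\le K$ controls mass, not count, and you do not say how it bounds the covariance sum. So $\operatorname{Var}(N_\ell)=O(Kn_\ell)$ is asserted, not derived.

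\textbf{How the paper closes it.} It passes to larger variables $\widetilde Y_{j,S}$ that drop the absence test at every boundary with $y_{j,B}<\eta$. Then $Y_{j,S}\le\widetilde Y_{j,S}\le1$ and $\E\widetilde Y_{j,S}\le(1+3\eta)z_{j,S}$. At most $K/\eta$ variables retain a test at any block, so each variable depends on at most $2K/\eta$ others, giving $\operatorname{Var}(\widetilde N_\ell)\le4Kn_\ell/\eta$. Chebyshev is applied to $\widetilde N_\ell$. Since $\min\{N_\ell,\overline N_{\ell-1}\}\ge\min\{\widetilde N_\ell,\overline N_{\ell-1}\}-(\widetilde N_\ell-N_\ell)$, where $\overline N_{\ell-1}$ is the preceding group's supply, returning to the actual demand costs only $3\eta n_\ell$ in expectation. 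An Efron--Stein bound would also work: at each block, treat the at most $2K$ jobs with $y_{j,B}\ge1/2$ separately from the rest, which gives $\operatorname{Var}(N_\ell)=O(Kn_\ell)$.

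\textbf{Two smaller points.} First, you need no control of the correlation between supply and demand; a union bound over their two lower-tail events suffices. Second, your fallback cannot work: concentrating only the supply while using the mean of the demand is not enough. The demand could have the right mean yet vanish with constant probability, so the expected minimum needs a lower-tail bound for the demand itself.
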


\begin{proof}
We adapt the \emph{size grouping} argument of \citep{ArmbrusterGrandoniTinguelyWiese2026}, 
retaining the stated bound on the assignment to each job and excluding every boundary slot.
First observe that $z_j\le1-a_j-b_j\le(1-a_j)(1-b_j)=D_j$.
In particular, $D_j=0$ implies $z_j=0$, and $z_j/D_j\le1$ whenever $D_j>0$.

Fix a superblock $S$ and put $Z_S:=\sum_{j\text{ global}}z_{j,S}$.
If $Z_S=0$, use the empty matching in $S$.
Otherwise, order the jobs with $z_{j,S}>0$ by nonincreasing processing time.
Each of these jobs spans $S$.
For $\ell=1,\ldots,q-1$, take the longest prefix whose total $z_{j,S}$ mass is at most $\ell\eta Z_S$.
Starting from the empty prefix, take the differences of consecutive prefixes
and then the final suffix to form groups $J_1,\ldots,J_q$.
No job is split between groups. Since $z_{j,S}\le1$, their masses satisfy
\[
  n_\ell:=\sum_{j\in J_\ell}z_{j,S}
  \in[\eta Z_S-1,\eta Z_S+1].
\]
Empty prefixes and groups are allowed.
Thus the groups have approximately equal placement mass, 
while every job in an earlier group is at least as long as every job in a later group.

Let $\overline N_\ell$ be the number of sampled interior slots inside $S$ produced by jobs in $J_\ell$.
The definition of $z_{j,S}$ gives $\E\overline N_\ell=n_\ell$.
This count is a sum of independent block contributions, each in $[0,K]$.
Consequently, $\operatorname{Var}(\overline N_\ell)\le K n_\ell$.
For every global job, define its allowed fractional assignment within $S$ by
\[
  Y_{j,S}:=\frac{z_{j,S}}{D_j}
       \mathbf1\{j\text{ is absent at both boundaries}\},
\]
again setting the expression to zero when $D_j=0$.
Then $0\le Y_{j,S}\le1$, $\E Y_{j,S}=z_{j,S}$, and
\[
  \sum_{S\in\Scal}Y_{j,S}
  =\frac{z_j}{D_j}\mathbf1\{j\text{ is absent at both boundaries}\}\le1
\]
under the same zero convention.
The factor $1/D_j$ compensates for using $j$ only when it is absent at both boundaries, 
so its expected allowed assignment remains $z_{j,S}$.

For each $\ell=2,\ldots,q$, assign jobs in $J_\ell$ to interior slots produced by jobs in $J_{\ell-1}$.
Every such pair is compatible: the job to be assigned spans $S$ 
and is no longer than the job that produced the slot.
Writing $N_\ell:=\sum_{j\in J_\ell}Y_{j,S}$, we can therefore construct a fractional matching of value
$\min\{N_\ell,\overline N_{\ell-1}\}$
that assigns at most $Y_{j,S}$ to each job $j$.
To see this, fill the unit capacities of the available slots successively with the allowed fractional assignments; 
every job can use every slot in this pairing.
Different pairings assign disjoint groups of jobs and use disjoint sets of slots.
Jobs in $J_1$ receive no assignment, removing at most $n_1\le\eta Z_S+1$ from the expected allowed total; 
slots produced by $J_q$ are unused.
Any further loss comes from a mismatch between available slots and allowed assignments in the paired groups.
Figure~\ref{fig:size-groups} illustrates the construction.

\begin{figure}[htbp]
\centering
\begin{tikzpicture}[x=1cm,y=1cm,>=Stealth,font=\small,
  group/.style={draw,rounded corners=2pt,minimum width=2.7cm,
                minimum height=.8cm,align=center}]
  \node at (3.6,2.15) {Processing times do not increase from left to right};
  \node at (0,1.65) {$J_{\ell-1}$};
  \node at (3.6,1.65) {$J_\ell$};
  \node at (7.2,1.65) {$J_{\ell+1}$};
  \node[group,fill=green!8] (s0) at (0,.9)
    {available slots\\$\overline N_{\ell-1}$};
  \node[group,fill=green!8] (s1) at (3.6,.9)
    {available slots\\$\overline N_\ell$};
  \node[group,fill=green!8] (s2) at (7.2,.9)
    {available slots\\$\overline N_{\ell+1}$};
  \node[group,fill=blue!6] (j0) at (0,-1.4)
    {assignment limit\\$N_{\ell-1}$};
  \node[group,fill=blue!6] (j1) at (3.6,-1.4)
    {assignment limit\\$N_\ell$};
  \node[group,fill=blue!6] (j2) at (7.2,-1.4)
    {assignment limit\\$N_{\ell+1}$};
  \draw[->,thick,blue!70!black] (s0.south)--node[above,sloped]
    {every job fits} (j1.north);
  \draw[->,thick,blue!70!black] (s1.south)--node[above,sloped]
    {every job fits} (j2.north);
  \draw[->,gray,dashed] (-1.6,-.33)--(j0.north);
  \draw[->,gray,dashed] (s2.south)--(8.8,-.17);
\end{tikzpicture}
\caption{Three consecutive size groups within a superblock $S$, where $3\le\ell\le q-2$. 
A group's interior slots serve jobs in the next group.
All these jobs span $S$, and the earlier group's slots are long enough for the later group's jobs. 
The top quantities are integer slot counts; 
the bottom quantities are total allowed fractional assignments.
Dashed arrows continue the adjacent pairings outside the picture.}
\label{fig:size-groups}
\end{figure}

It remains to bound the expected value of these pairings.
The variables $Y_{j,S}$ can be dependent because different jobs may share boundary blocks. 
For this estimate only, omit the absence test at a boundary whose placement marginal is below $\eta$.
For $D_j>0$, the resulting variable is
\[
  \widetilde Y_{j,S}:=
  \frac{z_{j,S}}{D_j}
  \prod_{\substack{B\in\{B_L(j),B_R(j)\}\\y_{j,B}\ge\eta}}
       \mathbf1\{j\notin C_B\}.
\]
An empty product equals one, and $\widetilde Y_{j,S}=0$ when $D_j=0$.
These modified variables are only analytical upper bounds; 
the actual fractional matching still uses $Y_{j,S}$.
Removing an absence test increases the variable and divides its expectation by a factor greater than $1-\eta$.
At most two tests are removed, so
\[
  Y_{j,S}\le\widetilde Y_{j,S}\le1,
  \qquad
  z_{j,S}\le\E\widetilde Y_{j,S}
  \le(1-\eta)^{-2}z_{j,S}\le(1+3\eta)z_{j,S}.
\]

Each modified variable depends on at most two independent block samples.
At any block $B$, at most $K/\eta$ variables retain its absence test:
each corresponding job has $y_{j,B}\ge\eta$, whereas
\[
  \sum_j y_{j,B}=
  \sum_{C\in\Ccal(B)}|C|x_{B,C}\le K.
\]
Each modified variable therefore shares a retained block with at most $2K/\eta$ other variables. 
Variables with no shared retained block are independent, 
so only these pairs can have nonzero covariance.
For each such pair, use
$
  2|\operatorname{Cov}(\widetilde Y_{i,S},\widetilde Y_{j,S})|
  \le\operatorname{Var}(\widetilde Y_{i,S})
     +\operatorname{Var}(\widetilde Y_{j,S}).
$
After expanding the variance and applying this bound, 
each individual variance is counted at most $1+2K/\eta$ times.
Since a $[0,1]$ variable has variance at most its mean, 
the sum $\widetilde N_\ell:=\sum_{j\in J_\ell}\widetilde Y_{j,S}$ satisfies
\[
  \operatorname{Var}(\widetilde N_\ell)
  \le(1+2K/\eta)(1+3\eta)n_\ell
  \le\frac{4K n_\ell}{\eta}.
\]

Each of $\widetilde N_\ell$ and $\overline N_\ell$ is at least
$(1-\eta)n_\ell-\frac{4K}{\eta^5}$ with probability at least $1-\eta^2$.
If $n_\ell<4K/\eta^5$, the threshold is negative and the assertion is immediate.
Otherwise, each mean is at least $n_\ell$, and Chebyshev's inequality bounds
the failure probability for $\widetilde N_\ell$ by
\[
  \frac{4K n_\ell/\eta}{(\eta n_\ell)^2}
  =\frac{4K}{\eta^3 n_\ell}\le\eta^2.
\]
The variance bound for $\overline N_\ell$ gives an even smaller probability.

Set $c:=(1-\eta)(\eta Z_S-1)-\frac{4K}{\eta^5}$.
If $c>0$, a union bound implies that both $\widetilde N_\ell$ and $\overline N_{\ell-1}$ are 
at least $c$ with probability at least $1-2\eta^2$.
No independence between these two quantities is required.
For $c\le0$, their minimum is nonnegative, so in either case
$
  \E \left[ \min\{\widetilde N_\ell,\overline N_{\ell-1}\} \right]
  \ge(1-2\eta^2)c.
$
Replacing $\widetilde N_\ell$ by $N_\ell$ decreases the minimum by at most $\widetilde N_\ell-N_\ell$, 
whose expectation is at most $3\eta n_\ell$.
Thus
\[
  \E \left[ \min\{N_\ell,\overline N_{\ell-1}\} \right]
  \ge(1-2\eta^2)c-3\eta n_\ell.
\]
Summing over $\ell=2,\ldots,q$ gives a fractional matching $f_S$ within $S$
with
\begin{align*}
  \E|f_S|
  &\ge(1-\eta)^2(1-2\eta^2)Z_S-3\eta Z_S
       -\frac1\eta-\frac{4K}{\eta^6}
  \ge(1-6\eta)Z_S-\frac{5K}{\eta^6}.
\end{align*}
Here we used $q\eta=1$, $\sum_{\ell=2}^q n_\ell\le Z_S$, and
$(1-\eta)^2(1-2\eta^2)\ge1-2\eta-2\eta^2$.
The last inequality follows from $2\eta^2\le\eta$ and $K\ge1$.

Finally, combine the matchings over all superblocks.
Their slots are disjoint. A job may receive assignments in several superblocks, 
but their sum is at most $\sum_S Y_{j,S}$, 
giving the stated bound on $f_j$ and the unit capacity of every job.
Summing the expectation bounds proves the claimed value.
Only global jobs absent at both boundaries receive assignments in $f$, 
so they are distinct from the jobs in the initial matching.
The two matchings also use disjoint slots, 
because $f$ excludes every boundary slot and every slot produced by a local job.
Their union is therefore feasible.
\end{proof}

\subsection{Recovering jobs from crossing boundaries}

Consider global jobs $j,k$ with boundary pairs $(B_u,B_v)$ and $(B_s,B_t)$, where $u<s<v<t$.
Then $B_s\subseteq[r_j,d_j)$ and $B_v\subseteq[r_k,d_k)$.
If $p_j\le p_k$, job $j$ fits in every slot produced by $k$ in $B_s$.
When $j$ is absent at both boundaries and $k$ appears at both, 
we may retain $k$ in $B_t$ and assign $j$ to $k$'s slot in $B_s$.
If $p_k<p_j$, exchange their roles, using $j$'s slot in $B_v$ for $k$ and retaining $j$ in $B_u$.
Figure~\ref{fig:approach-crossing} illustrates this \emph{crossing exchange}.
The new placement satisfies the original scheduling constraints; 
it need not satisfy the relaxation's additional eligibility restriction.

\begin{lemma}\label{lem:rescue}
Let $\mathcal X$ be a collection of crossing pairs with no job in two pairs,
fixed before the configurations are sampled. Then
\[
  \E \left[ A_\times \right] 
  \ge F-W+\sum_{\{j,k\}\in\mathcal X}(a_jb_j)(a_kb_k)-8\eta F.
\]
\end{lemma}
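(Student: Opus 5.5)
We bound $A_\times$, the maximum integral job--slot matching, from below by the value of an explicit fractional matching, since bipartite matching polytopes are integral. For every sampling outcome, start from the union $M$ of the initial integral matching and the fractional matching $f$ of Lemma~\ref{lem:interior}. Its expected value is at least
\[
F-Z-W+(1-6\eta)Z-\frac{5K|\Scal|}{\eta^6}\ge F-W-6\eta Z-5\eta^2F ,
\]
using $F\ge K|\Scal|/\eta^8$. Because $Z\le F$ and $\eta\le1/100$, the total loss is at most $7\eta F$. The remaining work is to modify $M$ simultaneously for all pairs of $\mathcal X$ so that each pair contributes expected net gain at least $(a_jb_j)(a_kb_k)$, with slack inside the $8\eta F$ budget.

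Fix $\{j,k\}\in\mathcal X$ with boundary pairs $(B_u,B_v)$, $(B_s,B_t)$, $u<s<v<t$, and suppose $p_j\le p_k$ (the other case is symmetric). Let $E_{jk}$ be the event that $j$ is absent at both boundaries and $k\in C_{B_s}\cap C_{B_t}$.

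On $E_{jk}$, the initial matching assigns $k$ to one of its boundary slots. Reassign $k$ to its slot in $B_t$; if it was in $B_s$, this is just a change of representative. This frees $k$'s slot $\sigma$ in $B_s$. That slot is a boundary slot of $k$, so $f$ never uses it. Then erase all of $f_j$ and assign $j$ integrally to $\sigma$. This is compatible because $B_s\subseteq[r_j,d_j)$ and $p_j\le p_k$.

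The net change is $1-f_j\ge 1-z_j/D_j$. Its expectation is
\[
\Pr[E_{jk}]-\E[f_j\mathbf 1\{E_{jk}\}]
\ge D_j\,a_kb_k-z_j\,a_kb_k=(D_j-z_j)\,a_kb_k .
\]
Here I use independence of blocks, which needs $\{B_u,B_v\}\cap\{B_s,B_t\}=\varnothing$; strict interleaving guarantees this. I also use the pointwise bound from Lemma~\ref{lem:interior}, namely that $f_j\le z_j/D_j$ on the absence event. The absence event and $E_{jk}$ depend on disjoint block samples, but $f_j$ depends on all samples. The pointwise bound lets me replace $f_j$ by $z_j/D_j$ times the absence indicator, which factorizes.

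Finally, $D_j-z_j\ge 1-a_j-b_j+a_jb_j-(1-a_j-b_j)=a_jb_j$, because $z_j\le 1-a_j-b_j$. So the pair gains at least $(a_jb_j)(a_kb_k)$.

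Next, simultaneity. Pairs in $\mathcal X$ are job-disjoint, so each freed slot $\sigma$ belongs to a distinct job $k$, and each inserted job is distinct. Each modification touches only the job $j$, the job $k$, and the slot $\sigma$. Erasing $f_j$ only lowers loads, and the slot $\sigma$ had no fractional load from $f$. Different pairs therefore never compete, even when they share blocks. One subtlety must be checked: in some other pair, $j$ itself might play the role of the ``$k$'' job. This is excluded because every job lies in at most one pair, so the modifications commute. The modified object is a feasible fractional matching for every outcome, and its expected value is at least
\[
F-W-7\eta F+\sum_{\{j,k\}\in\mathcal X}(a_jb_j)(a_kb_k)\ge F-W+\sum_{\{j,k\}\in\mathcal X}(a_jb_j)(a_kb_k)-8\eta F .
\]
Integrality of the bipartite matching polytope then gives the same lower bound for $\E[A_\times]$.

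The main obstacle is the expectation of the erased mass. The fractional matching $f$ is correlated with the samples at $B_s$ and $B_t$, because those blocks may lie in superblocks where $f$ operates. Only the pointwise bound of Lemma~\ref{lem:interior}, which depends solely on $j$'s own boundary samples, decouples this correlation. I would also double-check that $\mathcal X$ is fixed before sampling, so that the event definitions do not depend on the outcome.
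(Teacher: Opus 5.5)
Your proposal is correct and follows the paper's proof essentially step for step. It uses the same union of the initial integral matching with the fractional matching $f$ of Lemma~\ref{lem:interior}, the same exchange (relocate the longer job to its other boundary slot, erase all of $f_j$, insert the shorter job integrally into the freed boundary slot), the same pointwise bound $f_j\le z_j/D_j$ giving the per-pair gain $a_kb_k(D_j-z_j)\ge(a_jb_j)(a_kb_k)$, the same job-disjointness argument for simultaneity, and the same appeal to integrality of the bipartite matching polytope.
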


\begin{proof}
Combine the initial integral matching with the fractional matching $f$ from Lemma~\ref{lem:interior}.
For each pair, let $r$ be its shorter job and $d$ its longer job, 
breaking ties by a fixed job order.
The crossing identifies a boundary slot of $d$ that can accommodate $r$.
If $r$ is absent at both boundaries and $d$ appears at both, 
retain $d$ at its other boundary. 
Move its initial assignment there if necessary, 
thereby freeing the slot to be used by $r$.
On this event, $f_d=0$ by Lemma~\ref{lem:interior}, 
because $d$ is not absent at both boundaries. 
Relocating $d$ therefore preserves its job capacity and the matching value.
Next delete all assignments $f_{r,\sigma}$ to $r$ and assign one unit of $r$ to the freed slot.
This deletes fractional assignments to $r$, not slots produced by $r$.
For $D_r>0$, the increase in matching value is at least
$1-f_r\ge1-\frac{z_r}{D_r}\ge0$.
If $D_r=0$, the exchange event has probability zero.

All these exchanges can be performed simultaneously.
The pairs use distinct job identities, so no job receives two new assignments
or is relocated by two exchanges.
Their freed slots are distinct, even if two pairs share a block.
Every freed slot is a boundary slot of the job that produced it, 
and hence is unused by $f$.
Deleting fractional assignments only releases capacity.
Thus the exchanges preserve every job and slot capacity.

The four boundary blocks of a crossing pair are distinct and sampled independently. 
The exchange event therefore has probability $D_ra_db_d$.
For $D_r>0$, the bound on $f_r$ holds for every outcome, including this event,
so the expected gain from the pair is at least
\begin{align*}
  D_ra_db_d\left(1-\frac{z_r}{D_r}\right)
  &=a_db_d(D_r-z_r)
  =a_db_d(1-y_r+a_rb_r)
  \ge(a_rb_r)(a_db_d).
\end{align*}
When $D_r=0$, both the event probability and the claimed product are zero.
This argument does not require independence between $f_r$ and the event,
or between the events for different pairs.

Before the exchanges, the expected matching value is at least
\[
  (F-Z-W)+(1-6\eta)Z-\frac{5K|\Scal|}{\eta^6}
  =F-W-6\eta Z-\frac{5K|\Scal|}{\eta^6}.
\]
Linearity of expectation adds the gains from all pairs.
The density bound and $Z\le F$ give
\[
  6\eta Z+\frac{5K|\Scal|}{\eta^6}
  \le(6\eta+5\eta^2)F\le8\eta F.
\]
Finally, the fractional matching polytope of a bipartite graph is integral.
For every outcome, the maximum matching computed by the algorithm is at least
as large as the fractional matching just constructed.
This proves the claimed inequality.
\end{proof}

Deleting the old assignments to $r$ is essential: 
the exchange gains $1-f_r$, because $r$ may already carry fractional interior assignment.

\subsection{Coordinating a noncrossing family}

Form the \emph{crossing graph}: its vertices are good jobs, 
and two vertices are adjacent when their jobs' boundary edges cross.
Choose a \emph{maximal matching} $\mathcal X$, 
meaning a set of pairwise vertex-disjoint edges to which no further edge can be added, 
and let $\nu:=|\mathcal X|$.
This choice uses only the configuration marginals and is made before sampling.
The unmatched good jobs have noncrossing boundary edges: 
any crossing between two unmatched jobs could be added to $\mathcal X$, contradicting maximality.

Apply Lemma~\ref{lem:gluing} to the good jobs that are unmatched by $\mathcal X$
and do not belong to $J_{\rm miss}$.
Their boundary edges are noncrossing, 
and every such job is tracked at both boundaries when projected consistency is used.
The lemma jointly samples block configurations with probabilities $x_{B,C}$
and prevents each of these jobs from appearing at both boundaries.

In this sample, retain all local executions and delete every global interior execution. 
Each coordinated job may retain whichever boundary execution appears.
For every other global job, fix its boundary with the larger marginal, 
breaking ties deterministically, and retain its sampled execution only there, if present.
Thus a job outside the coordinated family is not guaranteed to survive.
All retained executions come from feasible block schedules, 
and every job appears at most once, giving a feasible schedule of cardinality $A_{\rm keep}$.

Let $P:=\sum_{j\text{ bad}}e_j$ be the total boundary deficit of bad jobs.

\begin{lemma}\label{lem:coordination}
The coordinated schedule satisfies
\[
  \E \left[ A_{\rm keep} \right]
  \ge F-Z-4P-\nu-\frac{|J_{\rm miss}|}{2}
  \ge F-4\mathcal D-\nu-\frac{|J_{\rm miss}|}{2}.
\]
\end{lemma}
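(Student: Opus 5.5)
By linearity of expectation, I will compute the expected contribution of each job class to $A_{\rm keep}$ and compare it with $y_j$. Lemma~\ref{lem:gluing} preserves every block's configuration distribution, so each job $j$ lies in $C_B$ with probability exactly $y_{j,B}$. Each local job therefore contributes $y_j$, and the local total is $L$. Consider a coordinated good job, meaning one that is unmatched by $\mathcal X$ and outside $J_{\rm miss}$. Lemma~\ref{lem:gluing} rules out its appearing at both boundaries, so the events $j\in C_{B_L(j)}$ and $j\in C_{B_R(j)}$ are disjoint. Its retention probability is then exactly $a_j+b_j=y_j-z_j$. Every other global job $j$ is retained with probability $\max\{a_j,b_j\}$. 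I will write its loss as $z_j+\min\{a_j,b_j\}$. Summing over all jobs, the interior losses total at most $Z$, and the proof reduces to bounding $\sum\min\{a_j,b_j\}$ over the non-coordinated global jobs.

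I will split the non-coordinated global jobs into three classes.
\begin{enumerate}[label=(\roman*),leftmargin=*]
\item \emph{Bad jobs.} Assume $a_j<1/4$ without loss of generality, so the loss term is $\min\{a_j,b_j\}\le a_j$. I must show $a_j\le4e_j$, that is, $a_j\le a_j+b_j-4a_jb_j$, which is equivalent to $b_j(1-4a_j)\ge0$. This holds because $a_j<1/4$. Summing over bad jobs gives a loss of at most $4P$. This is the one algebraic check in the proof, and it is where the threshold $1/4$ in the definition of good jobs is used.
\item \emph{Matched good jobs.} Each such job has $\min\{a_j,b_j\}\le y_j/2\le1/2$. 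The $2\nu$ matched jobs therefore lose at most $\nu$ in total.
\item \emph{Jobs in $J_{\rm miss}$.} By the same bound, each loses at most $1/2$, for a total of at most $|J_{\rm miss}|/2$.
\end{enumerate}
I must make sure no job is counted twice. A job in $J_{\rm miss}$ might also be matched by $\mathcal X$. Such a job is still charged only once, since its single term $\min\{a_j,b_j\}\le1/2$ is absorbed by either budget. Both budgets are upper bounds, so overlap only helps. Adding all the losses gives $\E[A_{\rm keep}]\ge F-Z-4P-\nu-|J_{\rm miss}|/2$.

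For the second inequality, I will use \eqref{eq:deficit}, which gives $4\mathcal D=Z+4\sum_{j\text{ global}}e_j\ge Z+4P$ because every $e_j\ge0$. Then $F-Z-4P\ge F-4\mathcal D$, which proves the second inequality.

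The only delicate point is the exactness of the coordinated jobs' retention probability. It relies on three facts: the marginals at each boundary block are preserved exactly, duplicates are impossible for these jobs, and deleting interior executions never removes a boundary execution. Feasibility of the schedule is already argued in the text preceding the lemma, so I will only cite it.
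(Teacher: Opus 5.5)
Your proof is correct and follows the paper's route. Like you, the paper starts from the $F-Z$ expected local and boundary appearances and charges $\min\{a_j,b_j\}$ to each restricted job. It bounds bad jobs via $4e_j-a_j=b_j(1-4a_j)\ge0$, bounds matched and missed jobs by $1/2$ each (allowing double counting), and finishes with \eqref{eq:deficit}; your per-class exact retention probabilities are just a rephrasing of that loss accounting.
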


\begin{proof}
Before restricting jobs to one boundary, the expected number of local and global boundary appearances is $F-Z$, 
because each block has distribution $x_B$.
Restricting a global job to its more likely boundary loses exactly $\min\{a_j,b_j\}$ from this expected number.
For a bad job, write $a,b$ for its boundary marginals and $e=(a+b)/4-ab$.
If $a\le b$, then $a<1/4$ and $4e-a=b(1-4a)\ge0$.

The other ordering is symmetric, so bad jobs lose at most $4P$ in total.
For every global job, $\min\{a_j,b_j\}\le1/2$ since $a_j+b_j\le1$.
The $2\nu$ matched good jobs therefore lose at most $\nu$, 
and the jobs in $J_{\rm miss}$ lose at most another $|J_{\rm miss}|/2$.
If a missed job is already matched, this estimate only counts its loss twice.
The remaining global jobs are coordinated, 
so they cannot be duplicated and their boundary appearances count distinct retained jobs.
This proves the first inequality.
For the second, use $P\le\sum_{j\text{ global}}e_j$ and $Z+4\sum_{j\text{ global}}e_j=4\mathcal D$.
\end{proof}

The threshold $1/4$ makes a bad job's discarded marginal affordable from its deficit. 
For good jobs, the next inequality relates the crossing gain to the same deficit.

\subsection{Combining the two schedules}

\begin{lemma}\label{lem:good-product}
For good jobs $j,k$, $(a_jb_j)(a_kb_k)\ge\frac1{16}-\frac34(e_j+e_k)$.
\end{lemma}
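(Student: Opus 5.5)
The plan is to reduce the two-job inequality to a one-job inequality by linearizing the product. For a good job write $P_j:=a_jb_j$ and $\delta_j:=\tfrac14-P_j$. The goal is to show that $0\le\delta_j\le 3e_j$ for every good job.

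Granting this, expand around $\tfrac14$:
\[
  P_jP_k=\frac1{16}-\frac14(\delta_j+\delta_k)+\delta_j\delta_k .
\]
Since both $\delta$'s are nonnegative, the cross term can be dropped. This gives
\[
  P_jP_k\ge\frac1{16}-\frac14(\delta_j+\delta_k)\ge\frac1{16}-\frac34(e_j+e_k),
\]
which is the statement of the lemma.

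For the one-job bound, fix a good job and write $a,b$ for its boundary marginals and $s:=a+b$. The relaxation gives $s\le y_j\le1$, and goodness gives $a,b\ge\tfrac14$, hence $s\ge\tfrac12$.

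\begin{enumerate}[label=(\roman*),leftmargin=*]
  \item Nonnegativity of $\delta_j$ follows from $ab\le s^2/4\le\tfrac14$.
  \item For the upper bound, use $e=s/4-ab$ to compute
  \[
    \delta-3e=\frac14-\frac{3s}{4}+2ab .
  \]
  Bounding $ab\le s^2/4$ gives
  \[
    \delta-3e\le\frac14-\frac{3s}{4}+\frac{s^2}{2}=\frac12\Bigl(s-\frac12\Bigr)(s-1),
  \]
  which is nonpositive for $s\in[\tfrac12,1]$.
\end{enumerate}

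The definition of $e_j$ uses only $a_j,b_j$, so $z_j$ plays no role beyond ensuring $s\le1$. This completes the plan.

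The only delicate step is choosing the right one-job bound. One tempting alternative is the lower bound $ab\ge\tfrac14(s-\tfrac14)$, obtained from $(a-\tfrac14)(b-\tfrac14)\ge0$. That bound has the wrong direction for the quantity $\delta-3e$ and does not close the argument. The upper bound $ab\le s^2/4$, combined with the consequence $s\ge\tfrac12$ of goodness, is what works.

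Two sanity checks of the constant $3$ are useful:
\begin{itemize}
  \item At $a=b=\tfrac14$ one has $\delta=\tfrac3{16}$ and $e=\tfrac1{16}$, so the inequality $\delta\le3e$ is tight.
  \item At $a=b=\tfrac12$ both $\delta$ and $e$ vanish.
\end{itemize}
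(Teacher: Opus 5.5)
Your proof is correct and follows the paper's argument: the one-job inequality $\delta_j\le 3e_j$, i.e.\ $a_jb_j+3e_j\ge\frac14$, obtained from $ab\le s^2/4$ and $\frac12(s-\frac12)(s-1)\le0$ on $[\frac12,1]$, is exactly the paper's key step. The only difference is in combining the two jobs: you drop the nonnegative cross term $\delta_j\delta_k$ (using $a_jb_j\le\frac14$) instead of multiplying the lower bounds $a_jb_j\ge\frac14-3e_j$, and this avoids the paper's case split on whether $3(e_j+e_k)\ge\frac14$.
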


\begin{proof}
For $a,b\ge1/4$ with $a+b\le1$, put $s:=a+b\in[1/2,1]$ and $e:=s/4-ab$.
Then
\[
  ab+3e=\frac{3s}{4}-2ab
  \ge\frac{3s}{4}-\frac{s^2}{2}
  =\frac14+\frac12(s-1/2)(1-s)\ge\frac14.
\]
If $3(e_j+e_k)\ge1/4$, the claimed lower bound is nonpositive.
Otherwise, both $1/4-3e_j$ and $1/4-3e_k$ are positive, so
$
  (a_jb_j)(a_kb_k)
  \ge(1/4-3e_j)(1/4-3e_k)
  =\frac1{16}-\frac34(e_j+e_k)+9e_je_k.
$
\end{proof}

Let $Q:=\sum_{j\text{ matched by }\mathcal X}e_j$.
Since the matched jobs are distinct, $Q\le\sum_{j\text{ global}}e_j\le\mathcal D$.
Lemmas~\ref{lem:rescue} and~\ref{lem:good-product}, together with $F-W=L+3G/4+\mathcal D$, give
\begin{align*}
  \E \left[ A_\times \right]
  &\ge L+\frac34G+\mathcal D+\frac{\nu}{16}-\frac{3Q}{4}-8\eta F
  \ge L+\frac34G+\frac{\mathcal D}{4}+\frac{\nu}{16}-8\eta F.
\end{align*}
Lemma~\ref{lem:coordination} gives the complementary bound
\[
  \E \left[ A_{\rm keep} \right]
  \ge L+\frac34G+\frac G4-4\mathcal D-\nu-\frac{|J_{\rm miss}|}{2}.
\]
More selected crossing pairs increase the first guarantee and decrease the second. 
Their dependence on $\mathcal D$ is also opposite, 
allowing both quantities to cancel when the bounds are combined.

\begin{theorem}\label{thm:rounding}
With full consistency, Algorithm~\ref{alg:rounding} returns a feasible schedule whose expected cardinality satisfies
\[
  \E \left[  \max\{A_\times,A_{\rm keep}\} \right]
  \ge L+\frac{13}{17}G-8\eta F.
\]
The same guarantee holds with projected consistency if $|J_{\rm miss}|\le\eta F$.
\end{theorem}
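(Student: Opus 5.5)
We combine the two displayed bounds that precede the statement, with weights $16/17$ on the first and $1/17$ on the second. These weights make the $\mathcal D$ and $\nu$ terms cancel. Feasibility is immediate: the matching schedule is feasible by construction of the job--slot graph, and the coordinated schedule is feasible by the argument preceding Lemma~\ref{lem:coordination}. That argument rests on Lemma~\ref{lem:gluing}, applied to the unmatched good jobs outside $J_{\rm miss}$, which form a noncrossing family. Returning the larger of two feasible schedules is again feasible.

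For the expectation bound, first note that $\max\{A_\times,A_{\rm keep}\}\ge\frac{16}{17}A_\times+\frac1{17}A_{\rm keep}$ holds pointwise. Hence
\[
  \E\left[\max\{A_\times,A_{\rm keep}\}\right]\ge\frac{16}{17}\E[A_\times]+\frac1{17}\E[A_{\rm keep}].
\]
The first bound comes from Lemma~\ref{lem:rescue} applied to $\mathcal X$, which is legitimate because $\mathcal X$ is fixed from the marginals before sampling and its pairs are vertex-disjoint. Lemma~\ref{lem:good-product} gives $(a_jb_j)(a_kb_k)\ge\frac1{16}-\frac34(e_j+e_k)$ for each pair, and $Q\le\mathcal D$. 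Together these yield
\[
  \E[A_\times]\ge L+\tfrac34G+\tfrac{\mathcal D}4+\tfrac\nu{16}-8\eta F.
\]
The second bound comes from Lemma~\ref{lem:coordination}:
\[
  \E[A_{\rm keep}]\ge L+\tfrac34G+\tfrac G4-4\mathcal D-\nu-\tfrac{|J_{\rm miss}|}{2}.
\]

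Weighting, the coefficient of $\mathcal D$ is $\frac{16}{17}\cdot\frac14-\frac1{17}\cdot4=0$, and the coefficient of $\nu$ is $\frac{16}{17}\cdot\frac1{16}-\frac1{17}=0$. The terms $L+\frac34G$ survive with total weight one, and $\frac1{17}\cdot\frac G4=\frac{G}{68}$ is added. Since $\frac34+\frac1{68}=\frac{52}{68}=\frac{13}{17}$, we obtain
\[
  L+\tfrac{13}{17}G-\tfrac{16}{17}\cdot8\eta F-\tfrac{|J_{\rm miss}|}{34}.
\]
With full consistency $J_{\rm miss}=\varnothing$, so the error is at most $\frac{128}{17}\eta F\le8\eta F$.

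The only point needing care is the projected case. There, $|J_{\rm miss}|\le\eta F$ contributes an additional error of at most $\eta F/34$. This is absorbed because $\frac{128}{17}\eta F+\frac{1}{34}\eta F=\frac{257}{34}\eta F<8\eta F$. So the stated bound $L+\frac{13}{17}G-8\eta F$ holds in both versions.

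The main potential obstacle is bookkeeping rather than mathematics. One must check that the hypotheses of Lemma~\ref{lem:rescue} hold for the maximal matching $\mathcal X$ chosen in Algorithm~\ref{alg:rounding}: it is independent of the sample, and each pair is genuinely crossing. One must also check that both lemmas refer to the same $x$ and hence the same $F,L,G,\mathcal D$. Both facts are guaranteed by the setup at the start of Section~\ref{sec:rounding}.
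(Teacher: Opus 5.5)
Your proposal is correct and is essentially the paper's own proof. You apply the pointwise bound $\max\{A_\times,A_{\rm keep}\}\ge\frac{16}{17}A_\times+\frac1{17}A_{\rm keep}$ to the two preceding bounds, which cancels $\mathcal D$ and $\nu$ and leaves $L+\frac{13}{17}G-\frac{128}{17}\eta F-\frac{|J_{\rm miss}|}{34}$; the projected case is then absorbed via $\frac{257}{34}<8$, exactly as the paper does.
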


\begin{proof}
For every pair of output sizes, their maximum is at least their convex combination with weights $16/17$ and $1/17$.
These weights cancel both $\mathcal D$ and $\nu$, since $16(\mathcal D/4+\nu/16)=4\mathcal D+\nu$.
Consequently,
\begin{align*}
  \E \left[ \max\{A_\times,A_{\rm keep}\}  \right]
  &\ge\frac{16}{17}\E \left[ A_\times  \right] +\frac1{17}\E \left[ A_{\rm keep}  \right]
  \ge L+\frac{13}{17}G
       -\frac{128}{17}\eta F-\frac{|J_{\rm miss}|}{34}.
\end{align*}
No independence between the two output schedules is needed.
In the full version, $J_{\rm miss}=\varnothing$.
In the projected version, the assumed bound on $|J_{\rm miss}|$ gives total error at most
\[
  \left(\frac{128}{17}+\frac1{34}\right)\eta F
  =\frac{257}{34}\eta F<8\eta F.
\]
\end{proof}

\section{Implementation and approximation guarantees}\label{sec:implementation}

The implementation in \citep{ArmbrusterGrandoniTinguelyWiese2026}
enumerates configurations when $K$ is constant and uses color coding to optimize
additive job weights when $K$ is logarithmic in $T$.
Our stronger relaxation must also enforce correlations between blocks.
The main challenge in the logarithmic case is to represent these correlations
without enumerating configurations, while keeping the tracked sets small and useful after reoptimization.
We first give the constant-$K$ implementation, then construct and solve the projected relaxation. 
Throughout, $m$ is fixed until the final reduction in Section~\ref{sec:completion}.

\subsection{The polynomial-time implementation}

Set $\Delta=1$ in Lemma~\ref{lem:partition}, so $K$ and $H$ depend only on $m,\eta$. 
For each candidate partition, enumerate all sets of at most $K$ eligible jobs, 
their assignments to machines, and their orders on each machine.
For a fixed order, schedule each job as early as its release time and the preceding completion permit, 
starting no earlier than the block's left endpoint.
The order is feasible exactly when these executions meet their deadlines and
finish within the block. This test uses binary arithmetic and 
produces the stored schedules needed for rounding.

There are $n^{O(K)}$ configurations per block. Enumerating tuples on at most
three blocks gives $n^{O(K)}\poly(|\B|)$ variables and constraints for the full consistency relaxation of Section~\ref{sec:prelim}.
Maximize $F$, retaining the job and density constraints, and discard infeasible candidates. 
For the candidate supplied by Lemma~\ref{lem:partition}, the reference schedule $O'$ is feasible in this LP. 
Thus selecting a feasible candidate with largest objective gives $F\ge|O'|\ge(1-2\eta)\OPT$.
Apply Algorithm~\ref{alg:rounding} to its configuration distributions.
The same subset and order enumeration tests target cardinalities up to $H$ before constructing the partitions, 
solving the case $\OPT<H$ exactly as described in Appendix~\ref{app:partition}.

\subsection{Projected consistency and configuration copies}

For the pseudopolynomial implementation, set
\[
  \Delta=4\eta^{-4}\bigl(\lceil\log_2(T+1)\rceil+1\bigr).
\]
This is an integer because $\eta=1/q$.
Now $K,H=O_{m,\eta}(\log(T+1))$, so explicit configuration enumeration would take $n^{O_{m,\eta}(\log(T+1))}$ time.
Instead, we use the projected consistency constraints of Section~\ref{sec:prelim}.
Their tracked sets are chosen from successive solutions; 
the objective used to select these solutions is specified in the next subsection.

For one candidate partition, write $y=(y_{j,B})_{j,B}$ for the collection of placement marginals. 
A \emph{configuration copy} is a separate collection $x^{[k]}=(x^{[k]}_{B,C})_{B,C}$ of configuration distributions representing $y$.
At stage $i\ge0$, we have copies $x^{[0]},\ldots,x^{[i]}$, each satisfying
\[
  \sum_{C\in\Ccal(B)}x^{[k]}_{B,C}=1,
  \qquad
  \sum_{\substack{C\in\Ccal(B)\\j\in C}}x^{[k]}_{B,C}=y_{j,B}
  \quad B\in\B,\ 0\le k\le i.
\]
All copies therefore have the same placement marginals, but their probabilities on complete configurations may differ. 
The job and density constraints are imposed on the common marginals $y$.

After solving stage $i$, let $y^i$ be its selected marginals and set
$F_i:=\sum_{j,B}y^i_{j,B}$. Define
\[
  H_B^i:=\left\{j:\ j\text{ is global},\
       B\in\{B_L(j),B_R(j)\},\ y^i_{j,B}\ge\frac18\right\}.
\]
Since $\sum_jy^i_{j,B}\le K$, we have $|H_B^i|\le8K$.
Add a new copy $x^{[i+1]}$ and the projected joint distributions on these tracked sets 
for every group of at most three blocks.
Impose their consistency equations from Section~\ref{sec:prelim}.
Let $p_B^{[i+1]}(A)$ denote the probability of state $A$ at block $B$ in these new distributions, 
and link it to the new copy by
\[
  p_B^{[i+1]}(A)
  =\sum_{\substack{C\in\Ccal(B)\\C\cap H_B^i=A}}x^{[i+1]}_{B,C}
  \qquad A\subseteq H_B^i.
\]
Earlier variables and constraints remain and are reoptimized; 
the tracked sets used in their constraints do not change.
Figure~\ref{fig:copies} illustrates the first two additions.

Let $P_i$ be the set of marginals $y$ that can be extended to a feasible solution of all stage-$i$ constraints. 
Then $P_{i+1}\subseteq P_i$, and these sets are convex. 
For the candidate certified by Lemma~\ref{lem:partition},
the reference schedule $O'$ remains feasible at every stage: 
use its configurations with probability one in every copy, 
and their recorded subsets with probability one in every projected distribution.
Each group of three blocks has at most $2^{24K}$ projected tuples.
Crucially, each copy is linked only to the projected constraints introduced with that copy. 
This keeps its pricing objective dependent on just one tracked subset, as shown in Section~\ref{sec:pricing}.

\begin{figure}[htbp]
\centering
\begin{tikzpicture}[font=\small,
  box/.style={draw,rounded corners=2pt,align=center,
              minimum width=3.05cm,minimum height=.72cm,inner sep=5pt}]
  \node[box,fill=gray!10,minimum width=4.1cm] (y) at (4,2.1)
    {Common placement marginals $y_{j,B}$};
  \node[box] (x0) at (0,0) {$x^{[0]}$\\base configuration copy};
  \node[box,fill=blue!7] (x1) at (4,0) {$x^{[1]}$\\configuration copy};
  \node[box,fill=green!8] (x2) at (8,0) {$x^{[2]}$\\configuration copy};
  \draw (y.south west)--(x0.north);
  \draw (y.south)--node[right,fill=white,inner sep=2pt]
    {same marginals} (x1.north);
  \draw (y.south east)--(x2.north);
  \node[box,fill=blue!7] (p1) at (4,-1.85)
    {Projected distributions\\tracked sets $H_B^0$};
  \node[box,fill=green!8] (p2) at (8,-1.85)
    {Projected distributions\\tracked sets $H_B^1$};
  \draw (x1.south)--node[right,fill=white,inner sep=2pt]
    {record $C\cap H_B^0$} (p1.north);
  \draw (x2.south)--node[right,fill=white,inner sep=2pt]
    {record $C\cap H_B^1$} (p2.north);
\end{tikzpicture}
\caption{Separate configuration copies share placement marginals.
Connections denote equality constraints.
Copies $x^{[1]}$ and $x^{[2]}$ are linked only to their respective projected distributions. 
All variable values may change on reoptimization, 
while the tracked sets attached to earlier copies remain fixed.}
\label{fig:copies}
\end{figure}

\subsection{Controlling changes in the marginals}

Adding projected constraints can change the optimal marginals and make previously untracked jobs good. 
We control these changes by subtracting a small quadratic penalty from $F$.
The penalty ensures that a large change between successive solutions forces a measurable decrease in the objective. 
A rational piecewise-linear approximation lets us obtain this property while solving only LPs.

Let $d_y:=n|\B|$, including coordinates fixed to zero, and set
\[
  \beta:=\eta,\qquad \gamma:=\eta/100,\qquad
  h:=\lceil4d_y/\gamma\rceil.
\]
For $t\in[0,1]$, define
\[
  \phi(t):=\max_{\ell=0,\ldots,h}
       \left\{2\frac\ell h t-\left(\frac\ell h\right)^2\right\},
  \qquad
  \widehat\Phi(y):=F-\beta\sum_{j,B}\phi(y_{j,B}).
\]
The affine functions defining $\phi$ are tangents to $t^2$ at $\ell/h$.
Completing the square gives
\[
  0\le t^2-\phi(t)
   =\min_{\ell=0,\ldots,h}(t-\ell/h)^2\le\frac1{4h^2},
  \qquad 0\le\phi(t)\le t^2\le t.
\]
Also $\phi(0)=0$ and $\phi(1)=1$.
At every stage, maximize $\widehat\Phi$ over $P_i$.
This is an exact rational LP: introduce $u_{j,B}\in[0,1]$, 
impose $u_{j,B}\ge2(\ell/h)y_{j,B}-(\ell/h)^2$ for $\ell=0,\ldots,h$,
and maximize $F-\beta\sum_{j,B}u_{j,B}$.
At an optimum, $u_{j,B}=\phi(y_{j,B})$.

Start with stage $0$, and discard the candidate if any stage is infeasible.
After solving stage $i$, add the copy and projected constraints defined by the sets $H_B^i$, and solve stage $i+1$. 
Stop when
\[
  \|y^{i+1}-y^i\|_2^2
  :=\sum_{j,B}(y^{i+1}_{j,B}-y^i_{j,B})^2
  \le\gamma F_{i+1}.
\]
The left side is the \emph{squared Euclidean distance} between the two collections of marginals.

\begin{lemma}\label{lem:stability}
Each candidate either becomes infeasible or satisfies the stopping condition
within $\lfloor4n/(\beta\gamma)\rfloor+1$ transitions.
In the structural case $\OPT\ge H$, selecting the feasible candidate with
largest terminal $F$ gives $F\ge(1-3\eta)\OPT$.
\end{lemma}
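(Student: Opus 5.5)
We argue by a potential function. Let $\widehat\Phi_i:=\max_{y\in P_i}\widehat\Phi(y)$, the optimum at stage $i$, attained at $y^i$. Since $P_{i+1}\subseteq P_i$, the sequence $\widehat\Phi_i$ is nonincreasing. It is also bounded: $0\le\phi\le t$ gives $\widehat\Phi(y)\in[(1-\beta)F,F]$, and $F\le n$ because $y_j\le1$. So the total decrease over all transitions is at most $n$.

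The key step is a strong-concavity estimate showing that a large move of the marginals costs a definite amount of objective. Fix a transition $i\to i+1$ with $\|y^{i+1}-y^i\|_2^2>\gamma F_{i+1}$. Both $y^i$ and $y^{i+1}$ lie in the convex set $P_i$, so the midpoint $\bar y:=(y^i+y^{i+1})/2$ lies in $P_i$ as well. The map $F$ is linear. For $\psi(t)=t^2$, the parallelogram identity gives
\[
\psi(\bar t)=\tfrac12\psi(t)+\tfrac12\psi(t')-\tfrac14(t-t')^2 .
\]
Replacing $\psi$ by $\phi$ changes each of the three terms by at most $1/(4h^2)$. Summing over the $d_y$ coordinates therefore yields
\[
\widehat\Phi(\bar y)\ge\tfrac12\widehat\Phi_i+\tfrac12\widehat\Phi_{i+1}+\tfrac\beta4\|y^{i+1}-y^i\|_2^2-\tfrac{3\beta d_y}{4h^2}.
\]
Optimality of $y^i$ on $P_i$ gives $\widehat\Phi(\bar y)\le\widehat\Phi_i$. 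Rearranging, $\widehat\Phi_i-\widehat\Phi_{i+1}\ge\frac\beta2\|y^{i+1}-y^i\|_2^2-\frac{3\beta d_y}{2h^2}$. Since $h\ge4d_y/\gamma$, the error term is at most $\frac{3\beta\gamma^2}{32 d_y}$, which is negligible.

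The main obstacle is converting this into a decrease that is absolute rather than relative to $F_{i+1}$. Lower bounds on $F_{i+1}$ are available: the density constraint gives $F_{i+1}\ge m\Delta|\B|/\eta\ge1/\eta$. So each non-terminating transition decreases $\widehat\Phi$ by at least
\[
\frac{\beta\gamma}{2}F_{i+1}-\frac{3\beta\gamma^2}{32}\ge\frac{\beta\gamma}{4},
\]
using $F_{i+1}\ge1$ and $\gamma\le1$. The total decrease is at most $n$, so there are at most $4n/(\beta\gamma)$ such transitions. Hence the stopping condition or infeasibility occurs within $\lfloor4n/(\beta\gamma)\rfloor+1$ transitions.

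For the approximation claim, take the candidate certified by Lemma~\ref{lem:partition}. The reference solution $O'$ lies in every $P_i$, and its marginals are $0/1$-valued with $\phi(1)=1$, so $\widehat\Phi(O')=(1-\beta)|O'|$. The terminal solution is optimal for $\widehat\Phi$, and $\widehat\Phi\le F$, so its $F$ is at least $(1-\eta)|O'|\ge(1-\eta)(1-2\eta)\OPT\ge(1-3\eta)\OPT$. Selecting the feasible candidate with the largest terminal $F$ can only do better.
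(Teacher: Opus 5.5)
Your proof is correct and follows the paper's argument: the midpoint strong-concavity estimate on the convex set $P_i$, the density bound $F_{i+1}\ge1$ that converts each failed stopping test into an absolute decrease of at least $\beta\gamma/4$ in an objective confined to $[0,n]$, and the comparison with the reference value $(1-\beta)|O'|$ for the certified candidate. The only difference is bookkeeping. You charge $1/(4h^2)$ to each of three terms, whereas the paper uses $\phi\le t^2$ at the midpoint and incurs a single error $\xi=\beta d_y/(4h^2)$ on each side. Your cruder estimate is still far below $\beta\gamma/4$, so it is harmless.
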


\begin{proof}
Define $\Phi(y):=F-\beta\|y\|_2^2$, where $\|y\|_2^2=\sum_{j,B}y_{j,B}^2$, and put $\xi:=\beta d_y/(4h^2)$.
The approximation above gives
\[
  0\le\widehat\Phi(y)-\Phi(y)\le\xi,
  \qquad
  \xi\le\frac{\beta\gamma^2}{64d_y}\le\frac{\beta\gamma}{64}.
\]
Let $V_i:=\widehat\Phi(y^i)$ be the optimal value at stage $i$.
Both $y^i$ and $y^{i+1}$ belong to $P_i$, so their midpoint does as well.
The identity
\[
  \Phi\bigl((y^i+y^{i+1})/2\bigr)
  =\frac{\Phi(y^i)+\Phi(y^{i+1})}2
   +\frac\beta4\|y^{i+1}-y^i\|_2^2
\]
and optimality of $y^i$ imply
\[
  V_i\ge\widehat\Phi\bigl((y^i+y^{i+1})/2\bigr)
      \ge\frac{V_i+V_{i+1}}2-\xi
         +\frac\beta4\|y^{i+1}-y^i\|_2^2.
\]
Consequently,
\[
  V_i-V_{i+1}\ge\frac\beta2\|y^{i+1}-y^i\|_2^2-2\xi.
\]
The density constraint implies $F_{i+1}\ge1$.
Every failed stopping test therefore decreases the objective by at least $\beta\gamma/4$. 
Since $0\le\widehat\Phi(y)\le F\le n$, there can be at most $\lfloor4n/(\beta\gamma)\rfloor$ such failures. 
This proves the transition bound, regardless of how ties between optimal solutions are broken.

For the candidate certified by Lemma~\ref{lem:partition}, the reference schedule remains feasible at every stage. 
Its marginals are zero or one, so its objective value is $(1-\beta)|O'|$.
The terminal solution thus satisfies $F\ge\widehat\Phi(y)\ge(1-\eta)|O'|$.
Selecting the largest terminal $F$ gives $F\ge(1-\eta)(1-2\eta)\OPT\ge(1-3\eta)\OPT$.
\end{proof}

At a successful transition $i\to i+1$, use $x:=x^{[i+1]}$ and its projected distributions, 
with the tracked sets $H_B:=H_B^i$ and marginals $y:=y^{i+1}$. 
In particular, the tracked sets are those used to build this copy, 
not sets recomputed from the new solution.
Let $F:=F_{i+1}$ and consider the current set $J_{\rm miss}$ from Section~\ref{sec:rounding}.
For each $j\in J_{\rm miss}$, choose a boundary $B(j)$ where $j$ was untracked.
Then $y^i_{j,B(j)}<1/8$, whereas goodness in the current solution gives $y^{i+1}_{j,B(j)}\ge1/4$.
Each such job therefore contributes more than $1/64$ to a distinct coordinate of the squared distance. Hence
\[
  \frac{|J_{\rm miss}|}{64}
  \le\sum_{j\in J_{\rm miss}}
       (y^{i+1}_{j,B(j)}-y^i_{j,B(j)})^2
  \le\gamma F,
  \qquad |J_{\rm miss}|\le64\gamma F\le\eta F.
\]
This verifies the hypothesis of Theorem~\ref{thm:rounding}.
Both outputs of Algorithm~\ref{alg:rounding} use this selected configuration copy; 
the coordinated output first samples projected states and 
then samples full configurations conditionally, as in Lemma~\ref{lem:gluing}.

\subsection{Exact pricing with required jobs}\label{sec:pricing}

We now show how to solve each stage without listing its configurations.
The \emph{dual LP} has variables corresponding to constraints of our relaxation 
and one inequality for each configuration variable.
Checking these inequalities reduces to pricing: 
finding a configuration that maximizes the expression determined by the dual coefficients.
For a block $B=[s,t)$, a tracked set $H_B$ with $|H_B|\le8K$, 
rational coefficients $w_j$, and a rational table $\alpha(A)$ for $A\subseteq H_B$,
the required problem is
\begin{equation}\label{eq:pricing}
  \max_{C\in\Ccal(B)}
       \left\{\sum_{j\in C}w_j+\alpha(C\cap H_B)\right\}.
\end{equation}
The coefficients $w_j$ and $\alpha(A)$ are auxiliary scores from the dual;
they are not probabilities, and either may be negative.
Only jobs eligible in $B$ are available.

Enumerate $A\subseteq H_B$ with $|A|\le K$.
To impose $C\cap H_B=A$, require every job of $A$ and exclude every job of $H_B\setminus A$. 
The term $\alpha(A)$ is now constant, leaving an additive objective with prescribed job identities.
This is the extension needed beyond the additive pricing algorithm in
\citep[Lemmas 21 and 30]{ArmbrusterGrandoniTinguelyWiese2026}.

Set $\kappa:=\min\{K,n\}$. Use a \emph{perfect hash family}: 
a family of maps from the jobs to colors $\{1,\ldots,\kappa\}$ such that 
every set of at most $\kappa$ jobs has distinct colors under at least one map.
Such a family of size $2^{O(\kappa)}\log^{O(1)}(n+1)$ can be constructed 
in $2^{O(\kappa)}\poly(n)$ time \citep[Section 4]{AlonYusterZwick1995}.
A set with distinct colors is \emph{colorful}.
Using $\kappa$ colors does not alter the structural bound $K$ or its density constraints.
For a fixed coloring, discard it if two required jobs have the same color.
Otherwise, reserve the color of each required job for that identity by excluding all other jobs of that color. 
Let $c(j)$ denote the color of $j$, 
and let $M:=\{c(j):j\in A\}$ be the set of required colors.

First consider one machine. For $U\subseteq\{1,\ldots,\kappa\}$ and integer $v\in[s,t]$, 
let $\operatorname{DP}(U,v)$ be the maximum weight of a schedule inside $[s,v)$ that uses only colors in $U$, 
uses each at most once, and includes every required job whose color lies in $U$.
Colors of optional jobs may remain unused.
Define the latest start of $j$ allowed by $v$ and its deadline as
$u_j(v):=\min\{v,d_j\}-p_j$, and put
\[
  b(U):=\begin{cases}
    0,&U\cap M=\varnothing,\\
    -\infty,&U\cap M\ne\varnothing.
  \end{cases}
\]
Then
\[
  \operatorname{DP}(U,v)
  =\max\left\{b(U),\
    \max_{\substack{j\text{ allowed},\ c(j)\in U\\
                     u_j(v)\ge\max\{s,r_j\}}}
    \left[w_j+\operatorname{DP}\bigl(U\setminus\{c(j)\},u_j(v)\bigr)\right]
    \right\}.
\]
The initial values are $\operatorname{DP}(\varnothing,v)=0$ and $\operatorname{DP}(U,s)=b(U)$; 
an infeasible entry has value $-\infty$.
Evaluate the entries in increasing order of $|U|$.

To verify the recurrence, consider the last job $j$ of a nonempty feasible schedule. 
Its preceding jobs finish by $u_j(v)$ and use no job of color $c(j)$.
Conversely, a predecessor schedule counted by the displayed entry can be followed by $j$ starting at $u_j(v)$, 
which respects its release time, deadline, and block boundaries. Induction on $|U|$ proves correctness.
The value $b(U)=-\infty$ prevents the empty schedule from omitting a required job, 
even when its weight is negative. 
Reserving its color prevents a different job from being selected in its place.

For $m$ machines, assign each of the $\kappa$ colors to exactly one machine.
Run the one-machine DP on each machine's color set at time $t$, sum the values,
and maximize over all $m^\kappa$ assignments.
The color sets are disjoint, so at most $\kappa\le K$ jobs are selected in total. 
Every colorful feasible configuration is represented by one such assignment. 
Moreover, every feasible configuration with $C\cap H_B=A$ is colorful under some map in the family, 
and the reservations exclude none of its jobs under that map. 
Maximizing over the assignments, colorings, and subsets $A$, 
and adding $\alpha(A)$, therefore solves~\eqref{eq:pricing} exactly.

The one-machine DP has at most $2^\kappa(T+1)$ entries and $O(n)$ transitions per entry. 
Recording the maximizing choices also produces a feasible witness schedule. 
Including all subsets $A$ and machine assignments, 
the running time is $(2m)^{O(K)}\poly(n,T,N,\Lambda)$, 
where $\Lambda$ is the encoding length of the supplied coefficients and table.

To apply this routine to stage $i$, consider a variable $x^{[k]}_{B,C}$ with $k\ge1$. 
It occurs only in its normalization equation, the equations matching the common placement marginals, 
and the projected-state equations for copy $k$.
Thus its configuration-dependent dual expression has the form
\[
  \sum_{j\in C}w^{[k]}_{j,B}
  +\alpha^{[k]}_B(C\cap H_B^{k-1}),
\]
with the normalization contributing a constant independent of $C$.
The base copy $x^{[0]}$ has only the additive term.
Consequently, the routine above checks all configuration inequalities of the dual, 
despite the many retained groups of projected constraints.
The inequalities associated with the other variables can be checked directly,
since these variables are explicitly listed.
Together, these checks give an exact \emph{separation oracle}: 
for any proposed dual solution, they either verify feasibility or return a violated inequality.

There are $O(n/\eta^2)$ additions by Lemma~\ref{lem:stability}.
Each introduces $O(|\B|^3 2^{24K})$ projected tuples and their consistency equations; 
the objective introduces $d_y(h+1)$ tangent inequalities.
The total explicitly represented system has size $2^{O(K)}\poly(n,|\B|,1/\eta,N)$.
Its coefficients have polynomial encoding length, 
and normalization together with the bounds on $y$ and $u$ bounds every variable of the stage LP.
The dual has polynomially many variables, 
and the separation oracle returns an actual dual inequality with polynomial encoding length.
Exact rational optimization and recovery of a basic solution of the stage LP
\citep{GroetschelLovaszSchrijver1988}
therefore detect infeasibility or recover an optimal solution 
whose number of positive entries and rational encoding length are polynomial in the explicit system size.
The recovered configurations retain the witness schedules supplied by pricing.
The total time over all stages is $(2m)^{O(K)}\poly(n,T,N)$ for fixed $m,\eta$.

The same DP handles the exact small-optimum tests: use the interval $[0,T)$,
make all input jobs available, use the target cardinality as the cap, 
and take unit weights, an empty tracked set, and $\alpha(\varnothing)=0$.
A target is feasible exactly when the capped optimum attains it.
The increasing target tests and their stopping rule are specified in Appendix~\ref{app:partition}.
Algorithm~\ref{alg:whole} assembles both implementations; 
the complementary rounding and the choice of $\eta$ are justified next.

\begin{algorithm}[t]
\caption{Throughput approximation for fixed $m$}
\label{alg:whole}
\begin{algorithmic}[1]
\Statex \textbf{Input:} An instance, accuracy $\varepsilon>0$, and the polynomial
or pseudopolynomial running-time requirement.
\Statex \textbf{Output:} A feasible schedule.
\State Delete individually infeasible jobs; if none remain, return the empty
schedule. Set $\eta=1/\lceil400/\min\{1,\varepsilon\}\rceil$.
\State Set $\Delta=1$ in the polynomial case and
$\Delta=4\eta^{-4}(\lceil\log_2(T+1)\rceil+1)$ otherwise.
Compute $K_0,\Gamma,H$ from Lemma~\ref{lem:partition}.
\State Test target cardinalities $1,\ldots,H$ using the exact routine for the
chosen running time, with all input jobs on $[0,T)$.
At the first infeasible target, return the largest schedule found.
If every target is feasible, retain the schedule of $H$ jobs.
\State Construct the candidate partitions from Lemma~\ref{lem:partition}.
For each candidate, solve the full consistency LP maximizing $F$ in the polynomial case, 
or run the projected procedure of Lemma~\ref{lem:stability}
in the pseudopolynomial case. Discard infeasible candidates.
\State Select the candidate with largest terminal $F$.
In the projected case, retain its newest configuration copy and the projected distributions attached to that copy.
Apply Algorithm~\ref{alg:rounding} to the selected solution.
\State In the pseudopolynomial case, also apply the complementary global-job rounding of Appendix~\ref{app:complementary} 
using the same configuration distributions and stored schedules.
\State \Return the largest schedule computed.
\end{algorithmic}
\end{algorithm}

\subsection{Completion of the guarantees}\label{sec:completion}

For the selected solution, Theorem~\ref{thm:rounding} gives
\[
  \E\left[\max\{A_\times,A_{\rm keep}\}\right]
  \ge L+\frac{13}{17}G-8\eta F
  =F-\frac4{17}G-8\eta F.
\]
Since $0\le G\le F$, this is at least $(13/17-8\eta)F$.
In the pseudopolynomial case, also use the complementary global-job rounding from \citep{ArmbrusterGrandoniTinguelyWiese2026}, 
as recorded in Appendix~\ref{app:complementary}.
Applied to the same selected configuration distributions and their stored schedules, 
it returns a feasible schedule of cardinality $A_{\rm glob}$ with $\E[A_{\rm glob}]\ge G-5\eta F$.
Returning the largest of the three schedules therefore gives
\begin{align*}
  \E\left[\max\{A_\times,A_{\rm keep},A_{\rm glob}\}\right]
  &\ge\max\left\{F-\frac4{17}G,\ G\right\}-8\eta F  
  \ge\left(\frac{17}{21}-8\eta\right)F.
\end{align*}
For the last inequality, put $g:=G/F\in[0,1]$.
The larger of $1-4g/17$ and $g$ is minimized at their intersection
$g=17/21$, where both equal $17/21$.
No independence between the output schedules is needed.

To bound the running time for every random outcome, implement each draw using a fixed number of unbiased bits.
Let $N_{\rm draw}$ bound the total number of draws and 
$N_{\rm supp}$ bound the number of positive-probability outcomes in any distribution used by a draw.
Here $N_{\rm draw}=O(|\B|+n)$: block sampling, sampling along the triangulation,
and conditional reconstruction use $O(|\B|)$ draws, 
while complementary rounding uses at most $n$ job assignments.
The recovered configuration distributions and the explicitly represented joint distributions provide the required finite supports.

For each draw, including a conditional draw, round its probabilities down to multiples of $2^{-b_{\rm rand}}$ 
and assign the remaining probability to one outcome that originally had positive probability, where
\[
  b_{\rm rand}:=
  \left\lceil\log_2\left(
     \frac{4N_{\rm supp}N_{\rm draw}n}{\eta}\right)\right\rceil.
\]
These probabilities can be sampled with exactly $b_{\rm rand}$ bits.
Their \emph{total variation distance}, half the sum of the absolute probability differences, is at most
$N_{\rm supp}2^{-b_{\rm rand}}\le\eta/(nN_{\rm draw})$.
No new outcome receives positive probability, 
so subsequent conditioning remains defined and every output schedule remains feasible.
As long as earlier outcomes agree, the exact and approximate next draws can be realized together 
so that they disagree with probability at most their total variation distance. 
A union bound over the draws then bounds the probability that 
the entire computation differs from exact sampling by $\eta/n$.
Since the returned cardinality lies in $[0,n]$, its expectation decreases by at most $\eta$.

\begin{proof}[Proof of Theorem~\ref{thm:main}]
For fixed $m$, the selected solution satisfies $F\ge(1-3\eta)\OPT$ in both implementations. 
Let $c=13/17$ in the polynomial case and $c=17/21$ in the pseudopolynomial case.
In the structural case, $\OPT\ge1$, and the expected throughput after the sampling approximation is at least
\[
  (c-8\eta)F-\eta
  \ge\bigl[(c-8\eta)(1-3\eta)-\eta\bigr]\OPT
  \ge(c-14\eta)\OPT.
\]
Here $c-8\eta>0$ and $c\le1$.
The small-optimum case is solved exactly.
With $\eta=1/\lceil400/\min\{1,\varepsilon\}\rceil$, both choices of $c$ satisfy
\[
  \frac1{c-14\eta}-\frac1c
  =\frac{14\eta}{c(c-14\eta)}<32\eta<\varepsilon.
\]
This gives the approximation ratios $17/13+\varepsilon$ and $21/17+\varepsilon$.
For fixed $m,\varepsilon$, the respective running times, 
including the exact target tests, are $n^{O_{m,\varepsilon}(1)}\poly(N)$ and $(nT)^{O_{m,\varepsilon}(1)}\poly(N)$.

Finally, remove the assumption that $m$ is fixed.
We may replace $m$ by $\min\{m,n\}$ without changing the optimum.
For every fixed $\delta>0$, the algorithm of \citep[Theorem 1.2]{ImLiMoseley2020} runs in polynomial time 
with arbitrary binary time data and retains expected fraction
$1-O(\sqrt{\log m/m})-\delta$ of the optimum for unweighted jobs.
Set $\delta:=\varepsilon/[2(1+\varepsilon)]$ and choose a constant $m_\varepsilon$ such that 
the $O(\sqrt{\log m/m})$ term is at most $\delta$ for all $m\ge m_\varepsilon$.
For these machine counts, the retained fraction is at least $1-2\delta=1/(1+\varepsilon)$, 
which suffices for both claimed factors.
For $m<m_\varepsilon$, use Algorithm~\ref{alg:whole}.
Its exponents now depend only on $\varepsilon$, 
giving $n^{O_\varepsilon(1)}\poly(N)$ and $(nT)^{O_\varepsilon(1)}\poly(N)$, respectively.
\end{proof}

\section{Conclusion}\label{sec:conclusion}

For unweighted non-preemptive throughput maximization on identical machines,
we improve the approximation ratios of \citep{ArmbrusterGrandoniTinguelyWiese2026}
from $4/3+\varepsilon$ to $17/13+\varepsilon$ in polynomial time and 
from $5/4+\varepsilon$ to $21/17+\varepsilon$ in pseudopolynomial time, 
for every fixed $\varepsilon>0$. 
Both algorithms are randomized and always return feasible schedules; 
their approximation guarantees hold in expectation. 
The improvement combines a stronger analysis of maximum job--slot matching 
through crossing exchanges with coordinated sampling for a noncrossing family of boundary edges.
The matching and coordinated rounding procedures use the same selected configuration solution, 
allowing their guarantees to be combined.
The pseudopolynomial bound additionally uses the complementary global-job rounding of \citep{ArmbrusterGrandoniTinguelyWiese2026}.
Consistency on triples of blocks enables the coordination, 
while projected consistency and configuration copies support the pseudopolynomial implementation 
when the configuration cap grows logarithmically with the time horizon.

Further improvements may come from strengthening either of the two rounding guarantees.
A promising direction is to analyze exchanges among larger sets of jobs with crossing boundary edges, 
allowing the choice of exchanges to depend on the sampled configurations.
The aim is to prove a larger net gain for the existing maximum matching,
accounting for conflicts and the fractional interior assignments that must be erased.
Another direction is to coordinate crossing boundary families whose
graph on blocks admits a tree decomposition with bags of bounded size.
Consistent joint distributions on the bags would extend the present conditional-sampling argument.
The main tasks are to select families that reduce the sum of $\min\{a_j,b_j\}$ over uncoordinated global jobs 
and to optimize the required distributions within the stated running-time bounds.

\section*{Acknowledgements}

Both authors were supported by the Office of Research at Georgia Southern University.

\section*{Statements and Declarations}


\noindent\textbf{Competing interests.}
The authors have no relevant financial or non-financial interests to disclose.

\vspace{3pt}
\noindent\textbf{Author contributions.}
Weitian Tong contributed to conceptualization, formal analysis, methodology, project administration, 
supervision, validation, writing--original draft, and writing--review and editing. 
Yao Xu contributed to formal analysis, methodology, validation, and writing--review and editing.

\vspace{3pt}
\noindent\textbf{Data availability.}
No datasets were generated or analyzed in this study.  
All information needed to support the theoretical results is contained in the manuscript and the cited references.

\vspace{3pt}
\noindent\textbf{Declaration of AI-assisted technologies.}
We used OpenAI ChatGPT to assist with preparing computational checks, drafting and revising the manuscript.  
The authors independently verified the correctness and originality of all mathematical claims, proofs, computations, and references, 
and take full responsibility for the final contents of the paper.


\appendix
\section{The dense partition specialization}\label{app:partition}

We prove Lemma~\ref{lem:partition} by modifying the merging step of
\citep[Lemmas 6 and 28]{ArmbrusterGrandoniTinguelyWiese2026}. 
The initial partition is unchanged; 
the larger value of $\Gamma$ gives the superblock density required by Lemma~\ref{lem:interior}.

Fix $m,\eta,\Delta$ as in Lemma~\ref{lem:partition}, and first assume $\OPT\ge H$. 
The construction of \citep[Lemmas 4 and 27]{ArmbrusterGrandoniTinguelyWiese2026} gives 
a partition $\B_0$ into \emph{elementary blocks} and an aligned reference schedule $O''$
such that
\[
  |O''|\ge(1-\eta)\OPT,
  \qquad |O''|\ge\frac{m|\B_0|}{\eta},
\]
with at most $K_0$ executions per elementary block across all machines. 
Choose the integer $K_0$, depending only on $m,\eta$, 
large enough that $H\ge(6m/\eta)^3$ already for $\Delta=1$; 
this is the threshold required by the imported construction. 
For fixed $m,\eta$, this construction runs in polynomial time with binary time data. 
Its greedy queries find the earliest feasible execution within an interval $[a,b)$: 
at current time $u$, job $j$ finishes earliest at $\max\{a,r_j,u\}+p_j$, 
which must be at most $\min\{b,d_j\}$. 
The fixed-machine reduction places the machines' time horizons consecutively, 
giving job $j$ the windows $[r_j+(i-1)T,d_j+(i-1)T)$ for $i=1,\ldots,m$.
Checking these windows uses binary arithmetic and does not enumerate start times.

Recall that $\Gamma=2K_0/\eta^7$ and $\eta=1/q$, so $\Gamma$ is an integer. 
Order the elementary blocks from left to right and append formal entries representing $[T,T)$ until
the list length is a multiple of $2\Delta\Gamma^q$. 
These empty entries are used only for grouping; 
they are not blocks in the resulting partitions. Let $b$ be the length of the padded list. 
The elementary bound on the number of executions gives
\[
  |\B_0|\ge\frac{|O''|}{K_0}
  \ge5(1-\eta)\Delta\Gamma^q\ge4\Delta\Gamma^q.
\]
Padding adds fewer than $2\Delta\Gamma^q$ entries, so $b\le3|\B_0|/2$.

For each $\ell=1,\ldots,q$, group consecutive entries in sets of $2\Delta\Gamma^{\ell-1}$,
and combine $\Gamma$ consecutive groups at a time. 
The nonempty unions at these two scales form the block partition $\B_\ell$ and the superblock partition $\Scal_\ell$, respectively. 
Their execution bounds and sizes satisfy
\[
  K_\ell:=2\Delta K_0\Gamma^{\ell-1}\le\Delta\Gamma^q,
  \qquad
  |\B_\ell|\le\frac{b}{2\Delta\Gamma^{\ell-1}},
  \qquad
  K_\ell|\Scal_\ell|\le\frac{K_0b}{\Gamma}
                       =\frac{\eta^7b}{2}.
\]
In particular, $\Scal_\ell=\B_{\ell+1}$ for $\ell<q$. 
Thus each superblock at one level is a block at the next level.

Consider an execution of $O''$ that violates eligibility at level $\ell$. 
Its containing block is neither its release block nor its deadline block, 
so that entire block lies inside the job's window. 
Its containing superblock is not spanned by the job and therefore contains a window endpoint. 
At any finer level, the superblock containing the execution lies within its level-$\ell$ block and is consequently spanned. 
At any coarser level, the block containing the execution contains its level-$\ell$ superblock and the same window endpoint, 
making it the job's release or deadline block. Hence an execution violates eligibility at at most one level.

Among the $q$ levels, some level $\ell^*$ therefore has at most $|O''|/q=\eta|O''|$ ineligible reference executions. 
Delete these executions and call the remaining schedule $O'$. 
Then $O'$ is aligned and eligible for $(\B_{\ell^*},\Scal_{\ell^*})$, has at most $K_{\ell^*}$ executions per block, 
and satisfies $|O'|\ge(1-\eta)|O''|\ge(1-2\eta)\OPT$.
The construction outputs all $q$ partitions; the choice of $\ell^*$ and the reference deletions are used only in the proof.

The elementary density and the padding bound give
\[
  |O'|\ge(1-\eta)|O''|
  \ge\frac{2m(1-\eta)b}{3\eta}
  \ge\frac{mb}{2\eta},
\]
where the last inequality uses $\eta\le1/4$. 
Combining this with the counting bounds above yields
\[
  \frac{m\Delta|\B_{\ell^*}|}{\eta}
  \le\frac{mb}{2\eta\Gamma^{\ell^*-1}}\le|O'|,
  \qquad
  \frac{K_{\ell^*}|\Scal_{\ell^*}|}{\eta^8}
  \le\frac{b}{2\eta}\le|O'|.
\]
These are the two required density inequalities. 
Since $q$ and $\Gamma$ depend only on $m,\eta$, 
constructing the padded list and all candidate partitions takes time polynomial in $N$ and $\Delta$.

For the exact case, use the target tests in Algorithm~\ref{alg:whole}, 
whose routines are described in Section~\ref{sec:implementation}. 
For a target cardinality $h_{\rm tar}\in\{1,\ldots,H\}$, 
a feasible schedule of at least $h_{\rm tar}$ jobs contains a feasible $h_{\rm tar}$-job subset. 
Thus, at the first infeasible target $h_{\rm tar}$, the previously computed schedule of $h_{\rm tar}-1$ jobs is optimal. 
A target above $n$ is immediately infeasible, while success through $H$ certifies $\OPT\ge H$. 
The tests use all input jobs over $[0,T)$, with no partition restrictions. 
There are at most $H$ tests, with caps at most $H$, 
giving the total bounds $n^{O(H)}\poly(N)$ and $(2m)^{O(H)}\poly(n,T,N)$ stated in Lemma~\ref{lem:partition}.

\section{Complementary global-job rounding}\label{app:complementary}

We verify the complementary rounding guarantee used in Section~\ref{sec:completion},
following \citep[Lemmas 22--25 and 34]{ArmbrusterGrandoniTinguelyWiese2026}. 
Fix the selected configuration distributions $x_{B,C}$ and their stored schedules, 
using the logarithmic choice of $\Delta$ from Section~\ref{sec:implementation}. 
The argument requires only feasibility and the density bound $F\ge m\Delta|\B|/\eta$. 
In the projected implementation, all quantities below come from the selected configuration copy and its stored schedules.

For every machine $i\in\{1,\ldots,m\}$, define
\[
  y_{j,B,i}:=\sum_{C\in\Ccal(B)}x_{B,C}
    \mathbf1\{\text{the stored schedule for }C
                      \text{ places }j\text{ on machine }i\}.
\]
Then $\sum_i y_{j,B,i}=y_{j,B}$ and $\sum_{B,i}y_{j,B,i}=y_j\le1$. 
Independently for each global job $j$, assign it to at most one pair $(B,i)$, 
choosing that pair with probability $(1-2\eta)y_{j,B,i}$ and leaving it unassigned with the remaining probability. 
Local jobs are not used.

For an interval $I=[a,b)$, write $|I|:=b-a$, and set $I_{j,B}:=B\cap[r_j,d_j)$. 
An assignment of $j$ to $B$ is \emph{long} if $p_j>\eta^4|I_{j,B}|$, and \emph{short} otherwise. 
This classification depends on the assigned block. Discard all long assignments. 
To bound their expected number, fix $B=[s,t)$ and a machine $i$. 
For an integer $h\ge0$, consider long global jobs with $y_{j,B,i}>0$ and $2^h\le p_j<2^{h+1}$.
Their intersections $I_{j,B}$ have length less than $2^{h+1}/\eta^4$. 
Since the jobs are global, each such intersection has left endpoint $s$ or right endpoint $t$. 
Their executions must therefore lie in two end portions of $B$ 
whose total length is at most $2^{h+2}/\eta^4$. 
In any stored schedule, the executions of these jobs on machine $i$ are disjoint and each has length at least $2^h$, 
so there are at most $4/\eta^4$. 
Averaging over configurations and summing over $h=0,\ldots,\lfloor\log_2T\rfloor$ gives
\[
  \sum_{\substack{j\text{ global}\\p_j>\eta^4|I_{j,B}|}}y_{j,B,i}
  \le\frac{4(\lfloor\log_2T\rfloor+1)}{\eta^4}\le\Delta.
\]
The expected total number of discarded long assignments is consequently at most $m\Delta|\B|$.

Fix a pair $(B,i)$ with $B=[s,t)$, and let $J_{B,i}$ be its short assigned jobs. 
The \emph{load} of a set of jobs is their total processing time. 
Retain $j\in J_{B,i}$ exactly when the following applicable load tests hold:
\[
\begin{aligned}
  \sum_{k\in J_{B,i}}p_k&\le t-s,\\
  \sum_{\substack{k\in J_{B,i}\\d_k\le d_j}}p_k&\le d_j-s
       &&\text{if }s<d_j<t,\\
  \sum_{\substack{k\in J_{B,i}\\r_k\ge r_j}}p_k&\le t-r_j
       &&\text{if }s<r_j<t.
\end{aligned}
\]
All tests use $J_{B,i}$ before any deletions caused by these tests. 
Each test compares the processing time of jobs that must use an interval with its length. 
A global job has at most one window endpoint strictly inside $B$, 
so each job faces at most two tests.

Condition on a short job $j$ being assigned to $(B,i)$ with positive probability. 
For either test applied to $j$, let $I$ be the tested interval and $\ell:=|I|$. 
Every job that can contribute to the test has its usable interval $I_{k,B}$ contained in $I$. 
For example, a job in the deadline test has $d_k\le d_j<t$ and, because it is global, $r_k\le s$. 
Thus $I_{k,B}=[s,d_k)\subseteq[s,d_j)$; the release test is symmetric. 
Shortness therefore gives $p_k\le\eta^4\ell$ for every counted job. 
Averaging the stored schedules shows that the sum of $p_k y_{k,B,i}$ over these jobs is at most $\ell$, 
since all their executions must lie in $I$. 
All other jobs' assignments remain independent under the conditioning, so
\[
  \E[\text{tested load}\mid j\text{ assigned to }(B,i)]
  \le\eta^4\ell+(1-2\eta)\ell\le(1-\eta)\ell.
\]

Under this conditioning, let $X$ be the tested load divided by $\eta^4\ell$, 
and put $\mu_0:=(1-\eta)/\eta^4$. 
Then $X$ is a sum of independent variables in $[0,1]$, 
including $j$'s fixed contribution, and $\E[X]\le\mu_0$. For $\theta\ge0$, 
independence and convexity of the exponential on $[0,1]$ give
\[
  \E[e^{\theta X}]
  \le\exp\bigl((e^\theta-1)\E[X]\bigr)
  \le\exp\bigl((e^\theta-1)\mu_0\bigr).
\]
The \emph{Chernoff bound} for sums of independent bounded variables, 
applied at threshold $1/\eta^4=(1+\eta/(1-\eta))\mu_0$, therefore gives
\[
  \Pr(\text{test fails}\mid j\text{ assigned to }(B,i))
  \le\exp\!\left(-\frac{1}{3(1-\eta)\eta^2}\right)\le\eta.
\]
Here $\eta/(1-\eta)\le1$ and $\eta\le1/100$. 
A union bound over the at most two tests
shows that a short assigned job survives with conditional probability at least $1-2\eta$.

To schedule the surviving jobs on machine $i$ within $B=[s,t)$, 
first place those with $s<d_j<t$ in increasing deadline order, consecutively from $s$. 
Place those with $s<r_j<t$ in increasing release order, 
consecutively in a suffix ending at $t$. 
The remaining jobs span $B$; place them in any order in the gap between these two groups. 
The groups are disjoint because all assigned jobs are global, 
and the total load test guarantees enough space. 
Let $J'_{B,i}$ denote the surviving jobs. 
For a job in the first group, its completion time is at most
\[
  s+\sum_{\substack{k\in J'_{B,i}\\d_k\le d_j}}p_k\le d_j.
\]
For a job in the second group, its start time is at least
\[
  t-\sum_{\substack{k\in J'_{B,i}\\r_k\ge r_j}}p_k\ge r_j.
\]
These inequalities follow from the passed tests, since deleting jobs only decreases loads,
and remain valid when deadlines or release times are tied. 
The other window endpoint imposes no additional restriction within $B$. 
Thus every block schedule is feasible. 
Since each job was assigned to at most one pair $(B,i)$, combining these schedules is feasible as well.

The expected number of initial assignments is $(1-2\eta)G$. 
Subtracting the expected long assignments and applying conditional survival to the remaining assignments yields
\[
\begin{aligned}
  \E[A_{\rm glob}]
  &\ge(1-2\eta)\bigl((1-2\eta)G-m\Delta|\B|\bigr)\\
  &\ge(1-2\eta)^2G-m\Delta|\B|
  \ge G-4\eta G-\eta F\ge G-5\eta F.
\end{aligned}
\]
The last line uses the density bound and $G\le F$. 
All deterministic steps are polynomial in the input and the represented configuration supports; 
the procedure uses at most one random assignment per global job. 
Section~\ref{sec:completion} implements these draws with a bounded number of random bits.

\clearpage
\section{Roadmap of the paper}\label{app:roadmap}

\begingroup
\setlength{\intextsep}{5pt}
\begin{figure}[H]
\centering
\captionsetup{font=footnotesize,skip=5pt}
\resizebox{\linewidth}{!}{%
\begin{tikzpicture}[
  x=1cm,y=1cm,font=\small,
  box/.style={draw=black!55,fill=black!2,rounded corners=2pt,
    line width=.65pt,align=center,inner xsep=6pt,inner ysep=4pt},
  first/.style={box,draw=blue!45!black,fill=blue!3},
  second/.style={box,draw=green!35!black,fill=green!3},
  result/.style={box,draw=blue!45!black,fill=blue!6,line width=.9pt},
  proof/.style={-{Stealth[length=2.2mm]},draw=black!75,line width=.75pt},
  input/.style={proof,dashed,draw=black!50},
  note/.style={font=\footnotesize,align=center,fill=white,inner sep=2pt}
]

\node[box,text width=12.4cm] (partition) at (0,0) {
  \textbf{Divide time into blocks}\\[2pt]
  {\footnotesize Some candidate partition preserves almost all jobs of an optimal schedule.}\\[2pt]
  {\footnotesize Lemma~\ref{lem:partition}; proof in Appendix~\ref{app:partition}}
};
\node[box,text width=14.5cm,below=3mm of partition] (lp) {
  \textbf{One fractional plan, two separate samples}\\[2pt]
  {\footnotesize Assign probabilities to feasible block schedules (configurations).\\
  Consistent choices on up to three blocks support coordination.\\
  Sampled blocks may repeat a job, although it can execute only once.}\\[2pt]
  {\footnotesize Section~\ref{sec:prelim}; $F$ counts expected job appearances before duplicates are resolved.}
};

\node[first,text width=6.65cm,anchor=north] (interior)
  at ($(lp.south)+(-4.1,-.3)$) {
  \textbf{First output: sample and match}\\[2pt]
  {\footnotesize \textbf{Independent sampling across blocks.}\\
  Use sampled executions as slots. A maximum matching assigns as many distinct
  input jobs as fit their slots and time windows.}\\[2pt]
  {\footnotesize Lemma~\ref{lem:interior}}
};

\node[second,text width=6.65cm,anchor=north] (gluing)
  at ($(lp.south)+(4.1,-.3)$) {
  \textbf{Second output: coordinate}\\[2pt]
  {\footnotesize \textbf{Joint sampling across blocks.}\\
  Keep each block's probabilities unchanged.\\
  Jobs in the chosen noncrossing family\\
  are never selected at both boundaries.}\\[2pt]
  {\footnotesize Lemma~\ref{lem:gluing}}
};
\node[fit=(interior)(gluing),inner sep=0pt] (firstrow) {};

\node[first,text width=6.65cm,anchor=north] (exchange)
  at ($(firstrow.south)+(-4.1,-.3)$) {
  \textbf{Why duplicates can help}\\[2pt]
  {\footnotesize A job chosen at both boundaries needs only one slot.
  For a crossing pair, the spare slot can serve a job of equal or shorter length
  missed at both boundaries. Deduct its already counted interior value.}\\[2pt]
  {\footnotesize Lemma~\ref{lem:rescue}}
};
\node[second,text width=6.65cm,anchor=north] (coordination)
  at ($(firstrow.south)+(4.1,-.3)$) {
  \textbf{Keep each job at most once}\\[2pt]
  {\footnotesize Keep local jobs and this family's boundary copies.
  Delete all global interior copies. Every other global job may survive only at
  a most likely boundary, fixed before sampling.}\\[2pt]
  {\footnotesize Lemma~\ref{lem:coordination}}
};
\node[fit=(exchange)(coordination),inner sep=0pt] (secondrow) {};

\node[box,text width=7.1cm,below=3mm of secondrow] (product) {
  \textbf{Relate gains to losses} {\footnotesize (Lemma~\ref{lem:good-product})}
};
\node[result,text width=12.8cm,below=3mm of product] (rounding) {
  \textbf{Return the larger complete schedule (Algorithm~\ref{alg:rounding})}\\[2pt]
  {\footnotesize Crossing pairs help the first bound but leave fewer jobs to coordinate in the second.\\
  Computing both schedules avoids having to predict which is better.}\\[2pt]
  {\footnotesize Theorem~\ref{thm:rounding}: the two-output rounding guarantee}
};

\node[box,text width=6.65cm,anchor=north] (implementation)
  at ($(rounding.south)+(-4.1,-.3)$) {
  \textbf{Compute the plan before sampling}\\[2pt]
  {\footnotesize List configurations or find them as needed.\\
  Choose the plan with largest final $F$.}\\[2pt]
  {\footnotesize Section~\ref{sec:implementation}: stability (Lemma~\ref{lem:stability})\\
  and exact pricing (Section~\ref{sec:pricing})}
};
\node[box,text width=6.65cm,anchor=north] (global)
  at ($(rounding.south)+(4.1,-.3)$) {
  \textbf{Third output: global jobs only}\\[2pt]
  {\footnotesize \textbf{Independent sampling across jobs.}\\
  Assign each job at most one block--machine pair;
  filter lengths and loads, then schedule survivors.}\\[2pt]
  {\footnotesize Appendix~\ref{app:complementary}; pseudopolynomial only}
};
\node[fit=(implementation)(global),inner sep=0pt] (thirdrow) {};
\node[result,text width=14.5cm,below=3mm of thirdrow] (main) {
  \textbf{Complete a guaranteed share of the optimum}\\[2pt]
  {\footnotesize Theorem~\ref{thm:main} for fixed \mbox{$\varepsilon>0$}; completion in Section~\ref{sec:completion}}\\[3pt]
  Outputs 1 and 2 (polynomial): $\frac{17}{13}+\varepsilon$
  \hspace{.7cm}
  All three (pseudopolynomial): $\frac{21}{17}+\varepsilon$\\[3pt]
  {\footnotesize Always feasible; expected ratio $\rho$ means at least $1/\rho$ of the optimal job count on average.}
};

\draw[input] (partition) -- (lp);
\draw[input] ([xshift=-3.6cm]lp.south) -- (interior.north);
\draw[input] ([xshift=3.6cm]lp.south) -- (gluing.north);
\draw[input] (lp.south) -- (product.north);
\coordinate (leftrail) at (-8.35,0);
\coordinate (rightrail) at (8.35,0);
\draw[input] (implementation.west) -- (leftrail |- implementation.west)
  -- (leftrail |- lp.west) -- (lp.west);
\draw[input] (lp.east) -- (rightrail |- lp.east)
  -- (rightrail |- global.east) -- (global.east);
\draw[input] (implementation.south)
  -- ([xshift=-4.1cm]main.north);

\draw[proof] (interior) -- (exchange);
\draw[proof] (gluing) -- (coordination);
\draw[proof] (exchange.south) -- ([xshift=-4.1cm]rounding.north);
\draw[proof] (product.south) -- (rounding.north);
\draw[proof] (coordination.south) -- ([xshift=4.1cm]rounding.north);
\draw[proof] (rounding.south) -- (main.north);
\draw[proof] (global.south) -- ([xshift=4.1cm]main.north);

\end{tikzpicture}%
}
\caption{Local jobs have windows in one block; others are global.
Good global jobs have probability at least $1/4$ at each boundary.
Choose a maximal matching of crossing pairs of good jobs (boundary endpoints alternate).
The unmatched good jobs form a noncrossing family; coordinate those recorded at both boundaries.
Full consistency records all jobs.
Interior assignments and exchanges are proof devices.
Solid arrows carry bounds; dashed arrows supply inputs.}
\label{fig:paper-roadmap}
\end{figure}
\endgroup

\bibliographystyle{alpha}
\bibliography{manuscript-20261025}

\end{document}